\newcommand{\commentout}[1]{}
\newtheorem*{theorem*}{Theorem}
\newtheorem*{lemma*}{Lemma}
\newtheorem*{corollary*}{Corollary}
\def\MathF{\hbox{\rm I\kern-2pt F}}
\def\MathP{\hbox{\rm I\kern-2pt P}}
\def\MathR{\hbox{\rm I\kern-2pt R}}
\def\MathZ{\hbox{\sf Z\kern-4pt Z}}
\def\MathN{\hbox{\rm I\kern-2pt I\kern-3.1pt N}}
\def\MathC{\hbox{\rm \kern0.7pt\raise0.8pt\hbox{\footnotesize I}
		\kern-4.2pt C}}
\def\MathQ{\hbox{\rm I\kern-6pt Q}}
\newcommand{\mommit}[1]{}
\newcommand{\namedref}[2]{\hyperref[#2]{#1~\ref*{#2}}}
\newtheorem{theorem}{Theorem}
\newtheorem{lemma}{Lemma}
\newtheorem{definition}{Definition}
\begin{document}
\title{Path-Reporting Distance Oracles with Linear Size}
\author{Ofer Neiman\footnote{Ben-Gurion University of the Negev, Israel. E-mail: \texttt{neimano@cs.bgu.ac.il}}\qquad  Idan Shabat\footnote{Ben-Gurion University of the Negev, Israel. Email:\texttt{shabati@post.bgu.ac.il}}}
\date{}

\maketitle

\begin{abstract}

Given an undirected weighted graph, an (approximate) distance oracle is a data structure that can (approximately) answer distance queries. A {\em Path-Reporting Distance Oracle}, or {\em PRDO}, is a distance oracle that must also return a path between the queried vertices.
Given a graph on $n$ vertices and an integer parameter $k\ge 1$, Thorup and Zwick \cite{TZ01} showed a PRDO with stretch $2k-1$, size $O(k\cdot n^{1+1/k})$ and query time $O(k)$ (for the query time of PRDOs, we omit the time needed to report the path itself). Subsequent works \cite{MN06,C14,C15} improved the size to $O(n^{1+1/k})$ and the query time to $O(1)$. However, these improvements produce distance oracles which are not path-reporting. Several other works \cite{ENW16,EP15} focused on small size PRDO for general graphs, but all known results on distance oracles with linear size suffer from polynomial stretch, polynomial query time, or not being path-reporting.

In this paper we devise the first linear size PRDO with poly-logarithmic stretch and low query time $O(\log\log n)$. More generally, for any integer $k\ge 1$, we obtain a PRDO with stretch at most $O(k^{4.82})$, size $O(n^{1+1/k})$, and query time $O(\log k)$. In addition, we can make the size of our PRDO as small as $n+o(n)$, at the cost of increasing the query time to poly-logarithmic. For unweighted graphs, we  improve the stretch to $O(k^2)$.

We also consider {\em pairwise PRDO}, which is a PRDO that is only required to answer queries from a given set of pairs ${\cal P}$. An exact PRDO of size $O(n+|{\cal P}|^2)$ and constant query time was provided in \cite{EP15}. In this work we dramatically improve the size, at the cost of slightly increasing the stretch. Specifically, given any $\epsilon>0$, we devise a pairwise PRDO with stretch $1+\epsilon$, constant query time, and near optimal size $n^{o(1)}\cdot (n+|{\cal P}|)$.

\end{abstract}

\newpage
\tableofcontents
\newpage

\section{Introduction}

Given an undirected weighted graph $G=(V,E)$ with $n$ vertices and positive weights on the edges $w:E\to\mathbb{R}_+$, the distance $d_G(u,v)$ between two vertices $u,v\in V$ is the minimal weight of a path between them in $G$. For a parameter $\alpha\ge 1$, a {\em distance oracle} with stretch $\alpha$ is a data structure, that given a query for a pair of vertices $(u,v)$, returns an estimated distance $\hat{d}(u,v)$ such that
\[
d_G(u,v)\leq\hat{d}(u,v)\leq\alpha\cdot d_G(u,v)~.
\]
A {\em Path-Reporting Distance Oracle}, or {\em PRDO}, is a distance oracle that must also return a path in $G$ of weight $\hat{d}(u,v)$ between the queried vertices $u,v$.

Distance oracles have been the subject of extensive research in the last few decades. They are fundamental objects in Graph Algorithms, due to their both practical and theoretical usefulness. The main interest is in the triple tradeoff between the stretch of a distance oracle, its size, and its query time. In some cases, the preprocessing time, (that is, the time needed to construct the distance oracle) is also considered. Note that for every query, a PRDO must return a path $P$, thus the running time of the query algorithm is always in the general form of $O(q+|P|)$. We usually omit the term\footnote{Throughout this paper, $|P|$ denotes the number of \textit{edges} in a path $P$.} $|P|$ from the query time and write only $O(q)$.

This work focuses on path-reporting distance oracles for general graphs. The path-reporting property is more appealing for certain applications that require navigation or routing \cite{JHR96,DSSW09,Z08}. See the survey \cite{S14} and the references therein for additional applications of distance oracles.

\subsection{Linear Size Path-Reporting Distance Oracles}

In \cite{TZ01}, a PRDO with stretch $2k-1$, size $O(kn^{1+1/k})$ and query time $O(k)$ was shown. Assuming the girth conjecture of Erd\H{o}s \cite{E64}, this result is best possible, up to the factor of $k$ in the size and query time.\footnote{In fact, Erd\H{o}s girth conjecture only implies that to achieve stretch $2k-1$, any distance oracle must use $\Omega(n^{1+\frac{1}{k}})$ \textit{bits}. However, in \cite{C15} a lower bound of $\Omega(n^{1+\frac{1}{k}})$ \textit{words} (each of size $\log n$ bits) for PRDOs is proved. 
} The query time was improved to $O(\log k)$ by \cite{WN13}. Observe that $kn^{1/k}\ge\log n$ for any $k\ge 1$, so the PRDO of \cite{TZ01,WN13} cannot be sparser than $\Theta(n\log n)$.
Subsequent works \cite{MN06,C14,C15} obtained distance oracles with stretch $2k-1$, improved size $O(n^{1+1/k})$ and constant query time. However, these distance oracles are not path-reporting.

Of particular interest is trying to achieve a PRDO of linear size. The first such result \cite{ENW16} obtained a PRDO with size $O(tn)$, for any parameter $t\ge 1$, and $O(\log t)$ query time, but had a polynomial stretch $O(tn^{2/\sqrt{t}})$, and required that the aspect ratio of the weights is polynomially bounded. This result was improved by \cite{EP15}, who showed a PRDO with stretch $O(k)$ and size $O(n^{1+1/k})$, but at the cost of increasing the query time to $O(n^{1/k+\epsilon})$, where $\epsilon>0$ is a {\em constant}. Note that the query time $O(n^{1/k+\epsilon})$ is prohibitively large - this term dominates the length of many of the output paths, so the PRDO suffers from large query time for these paths. For this reason, it is of special interest to construct linear size PRDOs with query time that is far less than polynomial in $n$, say polylogarithmic in $n$. Another variant of \cite{EP15} does achieve a PRDO with query time $O(\log\log n)$ and stretch $\log^{O(1)}n$, however its size $O(n\log\log n)$ is no longer linear.

We conclude that in all previous results, every linear size distance oracle suffers from a polynomial stretch \cite{ENW16}, has polynomial query time \cite{EP15}, or simply cannot report paths \cite{C15}.

\subsubsection{Our Results}

In this work we devise the first linear size PRDO for general graphs with polylogarithmic stretch and low query time. Specifically, for any integer $k\ge 1$ our PRDO has stretch $O(k^{4.82})$, size $O(n^{1+1/k})$, and query time $O(\log k)$. Indeed, setting $k=\log n$ yields linear size, $\log^{O(1)}n$ stretch and $O(\log\log n)$ query time. Our main result is for the case $k=\log n$, where we get a linear size PRDO with low query time. In fact, for any $k>\frac{\log n}{\log\log\log n}$, our new PRDO improves all previous bounds.

Note that since the query time is low, in most cases it is dominated by the length of the reported path. Therefore, the strength of this result is not in the precise expression for the query time, but in the fact that the query time is far less than polynomial in $n$.

We can refine our result to obtain an {\em ultra-compact} PRDO, whose size is as small as $n+o(n)$ (we measure the size by \textit{words}, that is, the oracle uses storage of $n\log n+o(n\log n)$ bits), at the cost of increasing the query time to $\log^{O(1)}n$. In view of the lower bound of \cite{C15}, this space usage is optimal, up to additive lower order terms.
If the graph is unweighted, we offer a simpler construction with improved stretch $O(k^2)$.

\subsection{Pairwise Path-Reporting Distance Oracles}

A {\em pairwise distance oracle} is a distance oracle that is also given as input a set of pairs ${\cal P}$, and is required to answer queries only for pairs in ${\cal P}$. The problem of designing such oracles is related to the extensive research on {\em distance preservers}: these are subgraphs that preserve exactly all distances between pairs in ${\cal P}$. When allowing some stretch, these are sometimes called {\em pairwise spanners}.\footnote{In \cite{KP22} pairwise spanners with small $1+\epsilon$ stretch are called {\em near-exact preservers}.} Distance preservers were introduced in \cite{CE05}, and pairwise spanners have been studied in \cite{CGK13,K17,BCPS20,B21,BW21,KP22}.

In \cite{EP15}, an exact pairwise PRDO was shown with constant query time and size $O(n+|{\cal P}|^2)$. For distance preservers, \cite{CE05} showed a lower bound of $\Omega(n^{2/3}|{\cal P}|^{2/3})$. Note that for $|{\cal P}|=n^{2-\delta}$, the lower bound implies that any distance preserver must have size $\Omega(|{\cal P}|\cdot n^{\delta/3})$, so there are no distance preservers with near-linear size (except for the trivial case when ${\cal P}$ contains a constant fraction of all pairs).
In \cite{BHT22}, it was proved that exact pairwise PRDOs suffer from the same lower bounds of exact preservers (see Theorem 14 in \cite{BHT22}). Thus, it is very natural to ask if the size of a pairwise PRDO can be reduced when allowing a small stretch. Specifically, we would like to obtain a very small stretch (e.g., $1+\epsilon$ for any $\epsilon>0$), and size that is proportional to $|{\cal P}|+n$ (which is the basic lower bound).

\subsubsection{Our Results}
In this work we devise a pairwise PRDO with near optimal size $n^{o(1)}\cdot (|{\cal P}|+n)$, constant query time, and stretch $1+\epsilon$, for any $\epsilon>0$ (the $o(1)$ term in the size depends logarithmically on $\epsilon$). This result uses the techniques of \cite{KP22} on hopsets and spanners, and extends them for pairwise path-reporting distance oracles.

\subsection{Our Techniques}

Our main result on linear size (and ultra-compact) PRDO uses a conceptually simple idea: we partition the graph into $O(n/k)$ clusters, and define the {\em  cluster-graph} by contracting every cluster to a single vertex (keeping the lightest edge among parallel edges). Next, we apply the \cite{TZ01} distance oracle on this cluster-graph. In addition, we store a certain spanning tree for every cluster. Given a query $(u,v)$, the algorithm first finds a path in the cluster-graph between the clusters containing $u,v$, and for each cluster in this path, it finds an inner-cluster path between the entry vertex and the exit vertex of the cluster, using the spanning tree.

In order to implement this framework, it is required to find a clustering so that the overhead created by going through the spanning tree of every cluster is small enough. For unweighted graphs, we apply a simple clustering with radius $O(k)$, and maintain a BFS tree for each cluster. However, for weighted graphs a more intricate clustering is required. Note that we cannot enforce a small diameter bound on all clusters, since by only restricting the diameter, each cluster can be very small, and we need to have at most $O(n/k)$ clusters. Instead, we use a variant of Bor\r{u}vka clustering \cite{BBDHKLM17}.

In Bor\r{u}vka's algorithm for minimum spanning tree, in each phase, every vertex adds its adjacent edge of minimal weight to a forest $F$ (breaking ties consistently), and the connected components of $F$ are contracted to yield the vertices of the next phase. If we truncate this process after $t$ phases, we get a clustering, analyzed in \cite{BBDHKLM17}. Unfortunately, any phase in this clustering may produce long chains, in which case the stretch cannot be controlled. To rectify this, we delete certain edges in $F$, so that every cluster is a star, while ensuring that the number of non-singleton clusters is large enough. These stars are also the basis for the spanning tree of each cluster. The main technical part is analyzing the stretch induced by this clustering on the paths returned by calling the distance oracle on the cluster-graph.

\subsection{Organization}

After some preliminaries in Section~\ref{sec:Preliminaries}, we show our PRDO for  unweighted graph is in Section~\ref{sec:Unweighted}, and for weighted graphs in Section~\ref{sec:NewPRDOSec}. Our result for pairwise PRDO appears in Section~\ref{sec:PairwisePRDO}. 

\subsection{Bibliographic Note}
Following this work, \cite{ES23} showed (among other results) a PRDO of linear size with stretch $\tilde{O}(\log n)$ and query time $O(\log\log\log n)$.

\section{Preliminaries} \label{sec:Preliminaries}

Let $G=(V,E)$ be an undirected weighted graph. In all that follows we assume that $G$ is connected.

\noindent{\bf Spanners.}
For a parameter $\alpha\ge 1$, an {\em $\alpha$-spanner} is a subgraph $S$ of $G$, such that for every two vertices $u,v\in V$,
\begin{equation}\label{eq:span}
d_S(u,v)\leq\alpha\cdot d_G(u,v)~.
\end{equation}
The spanner is called a {\em pairwise} spanner, if for a given a set of pairs ${\cal P}$, we only require \eqref{eq:span} to hold for all $(u,v)\in {\cal P}$.

\noindent{\bf Trees.}
Let $x,y$ be two vertices in a rooted tree. We denote by $p(x)$ the {\em parent} of $x$, which is the unique neighbor of $x$ that lies on the path from $x$ to the root, and by $h(x)$ its {\em height}, which is the number of edges on the path from $x$ to the root. Denote by $lca(x,y)$ the lowest common ancestor of $x,y$, which is a vertex $z$ such that $x,y$ are both in its sub-tree, but not both in the sub-tree of any child of $z$. Note that the unique path between $x,y$ in the tree is the concatenation of the unique paths from $x$ to $lca(x,y)$, and from $lca(x,y)$ to $y$.

The following lemma let us easily find paths within a tree $T$.
\begin{lemma} \label{lemma:TreeRouting}
Let $T$ be a rooted tree, and assume we are given $h(x)$ and $p(x)$ for every vertex $x$ in $T$. There is an algorithm that given two distinct vertices $a,b$ in a tree $T$, finds the unique path between $a,b$ in $T$. The running time of this algorithm is proportional to the number of edges in the output path (or $O(1)$ if the path is empty).
\end{lemma}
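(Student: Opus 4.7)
The plan is to implement the classical two-pointer walk to the lowest common ancestor, using only the parent and height information that we are promised. First I would assume without loss of generality that $h(a)\ge h(b)$ (if not, swap the roles and remember to reverse the output at the end). Starting at $a$, I repeatedly replace the current vertex by its parent, appending each traversed edge to a list $L_a$, until the current vertex $a'$ satisfies $h(a')=h(b)$. Since each parent step decreases the height by exactly $1$, this costs $h(a)-h(b)$ steps, which is at most the length of the $a$-to-$\lca(a,b)$ portion of the eventual output path.

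Next, starting simultaneously from $a'$ and $b$ (both now at the same height), I would walk up with parent pointers in lockstep, appending each edge to $L_a$ and $L_b$ respectively, until the two current vertices coincide; that coincidence point is $\lca(a,b)$. Finally, I would output the concatenation of $L_a$ with the reverse of $L_b$, which by the observation quoted in the excerpt is exactly the unique $a$-to-$b$ path in $T$.

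For the running time, I would observe that every parent step taken in either phase corresponds to a distinct edge of the output path $P$: the first-phase edges lie on the $a$-to-$\lca(a,b)$ sub-path, and the second-phase edges on the two sub-paths from $a'$ and $b$ to $\lca(a,b)$. Thus the total number of elementary operations (parent-lookups, height comparisons, and list appends) is $O(|P|)$, and since $a\ne b$ we have $|P|\ge 1$, so the time is truly proportional to the number of edges returned.

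There is essentially no conceptual obstacle here; the only care needed is the bookkeeping that guarantees we do not traverse more parent pointers than edges in $P$. In particular, it is crucial that the first phase stops the moment the heights are equal (rather than continuing past $\lca(a,b)$), and that in the second phase we advance both pointers at every iteration so that the two walks meet exactly at $\lca(a,b)$; these two invariants together ensure that every parent step is charged to a distinct edge of the output, giving the claimed $O(|P|)$ bound.
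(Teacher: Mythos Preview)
Your proposal is correct and is essentially the same approach as the paper's: the paper phrases the same parent-pointer walk to $\lca(a,b)$ recursively (at each step, move up the vertex of larger height, or $b$ in case of a tie), whereas you phrase it iteratively in two phases, but in both versions every parent step is charged to a distinct edge of the output path, yielding the $O(|P|)$ bound.
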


\begin{proof}

First, if $a=b$, then the desired path is empty and we return it in $O(1)$ time.
Otherwise, if $h(a)>h(b)$, recursively find the unique path $P_{p(a),b}$ in $T$ between $p(a)$ and $b$, and return $\{a,p(a)\}\circ P_{p(a),b}$. Symmetrically, if $h(b)\ge h(a)$, return $\{b,p(b)\}\circ P_{p(b),a}$.

The correctness of this algorithm follows from the fact that if, for example, $h(a)\ge h(b)$, and $a\neq b$, then $b$ cannot be in the sub-tree of $a$ in $T$, hence the unique path between $a,b$ must pass through $p(a)$.
In each recursive call we reduce the sum $h(a)+h(b)$ by $1$, and therefore the algorithm ends when $a=b=lca(a,b)$. Therefore the running time of this algorithm is proportional to the length of the unique paths from $a$ to $lca(a,b)$ and from $b$ to $lca(a,b)$. The concatenation of these paths is exactly the returned path by our algorithm, which is the unique path in $T$ between $a,b$. Hence, the running time is proportional to the number of edges in the output path, as desired.

\end{proof}

\subsection{Thorup-Zwick PRDO}

A main component of our new PRDO relies on a well-known construction by Thorup and Zwick \cite{TZ01}. Given a weighted graph $G$ with $n$ vertices and an integer parameter $k\geq1$, they constructed a PRDO with stretch $2k-1$, query time $O(k)$ and size $O(kn^{1+1/k})$.

A useful property of the Thorup-Zwick (TZ) PRDO is that for every query, it returns a path that is contained in a sub-graph $S$ of $G$, such that $|S|=O(kn^{1+1/k})$. Notice that since the stretch of this PRDO is $2k-1$, then $S$ must be a $(2k-1)$-spanner of $G$. We call $S$ the \textit{underlying spanner} of the PRDO. One can compute the underlying spanner $S$ either during the PRDO construction, or after its construction by querying the PRDO on every pair of vertices, and computing the union of the resulting paths.

A result of \cite{WN13} improved the query time of the TZ PRDO to $O(\log k)$ instead of $O(k)$, while returning the same path that the TZ PRDO returns. Indeed, when we use here the TZ PRDO, we consider its query time to be $O(\log k)$.
The following theorem concludes this discussion.
\begin{theorem}[By \cite{TZ01} and \cite{WN13}] \label{thm:TZ}
Let $G$ be an undirected weighted graph with $n$ vertices, and let $k\geq1$ be an integer parameter. There is a PRDO for $G$ with stretch $2k-1$, query time $O(\log k)$ and size $O(kn^{1+1/k})$, with an underlying spanner of the same size.
\end{theorem}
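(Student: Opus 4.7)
The plan is to assemble the result from the two cited constructions. First I would reconstruct the Thorup--Zwick oracle. Sample a hierarchy $V=A_0\supseteq A_1\supseteq\cdots\supseteq A_{k-1}\supseteq A_k=\emptyset$ by including each element of $A_{i-1}$ in $A_i$ independently with probability $n^{-1/k}$ (for $1\le i\le k-1$). For every $v\in V$ and every level $i$, let $p_i(v)$ be a closest vertex of $A_i$ to $v$ with $\delta_i(v):=d_G(v,A_i)$, and define the bunch
\[
B(v)\ =\ \bigcup_{i=0}^{k-1}\bigl\{w\in A_i\setminus A_{i+1}\ :\ d_G(v,w)<\delta_{i+1}(v)\bigr\}.
\]
A standard calculation shows $\Exp{|B(v)|}=O(k\cdot n^{1/k})$, giving total expected size $O(k\cdot n^{1+1/k})$. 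For each $v$ store, in a hash table, the distance $d_G(v,w)$ for every $w\in B(v)$, and for every level $i$ store the pivot $p_i(v)$ together with the predecessor pointer of $v$ in a shortest-path tree rooted at $p_i(v)$ (restricted to the relevant parts). These predecessor pointers let us trace any required path while remaining inside a fixed subgraph.

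Next I would describe the query. Given $(u,v)$, the original TZ procedure scans $i=0,1,2,\ldots$ swapping the roles of $u,v$ at each step, and stops at the first level for which $p_i(u)\in B(v)$ (or the symmetric condition); the estimated distance is $d_G(u,p_i(u))+d_G(p_i(u),v)$, and the reported path is the concatenation of the tree path from $u$ to $p_i(u)$ and from $p_i(u)$ to $v$, both traced via the stored predecessor pointers. The classical TZ analysis shows the stretch is at most $2k-1$ and the scan terminates after at most $k$ levels, yielding query time $O(k)$ excluding the $|P|$ edges of the path. To reduce this to $O(\log k)$, I would invoke the Wulff-Nilsen modification: augment the data structure with a succinct representation (per vertex) that, given $v$, allows one to determine in $O(\log k)$ time the smallest index $i$ at which the scan would succeed, using a binary-search-like traversal over the levels. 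Crucially, the path that is returned is identical to the TZ path, so the stretch bound is unaffected.

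Finally I would define the underlying spanner $S$: include, for every $v$ and every $i\in\{0,\ldots,k-1\}$, the edges on the stored shortest path from $v$ to $p_i(v)$, and, for every $w\in B(v)$, the edges on the stored shortest path from $v$ to $w$. Every path that the oracle returns is a concatenation of two such stored subpaths, so it lies entirely in $S$, which makes $S$ a $(2k-1)$-spanner whose size is bounded by the sum of bunch sizes plus the $k$ tree paths per vertex, i.e.\ $O(k\cdot n^{1+1/k})$ (the standard bunch-based accounting; each stored edge can be charged to some pair $(v,w)$ with $w\in B(v)\cup\{p_0(v),\ldots,p_{k-1}(v)\}$). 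One can either build $S$ directly during preprocessing, or, as noted in the excerpt, recover it by querying all pairs and unioning the outputs.

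The main obstacle is the Wulff-Nilsen $O(\log k)$ query-time speedup: naively binary-searching over $i$ does not work because the predicate ``$p_i(u)\in B(v)$'' is not monotone in $i$. The delicate point is to set up the auxiliary information (the ``level-structure'' per vertex in \cite{WN13}) so that this search becomes possible while preserving the exact TZ path so that stretch and the underlying spanner remain unchanged. Once that is in place, combining the size, stretch, query time and spanner properties yields the theorem.
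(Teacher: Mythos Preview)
The paper does not prove this theorem: it is stated as a direct citation of \cite{TZ01} (for the oracle, stretch, size, and the existence of an underlying spanner) together with \cite{WN13} (for the $O(\log k)$ query time while returning the same path). The only ``argument'' the paper gives is the short paragraph preceding the theorem, noting that the underlying spanner can be obtained either during preprocessing or by querying all pairs and taking the union of the returned paths. Your write-up goes well beyond this, actually sketching the TZ hierarchy, bunches, the query procedure, and the WN speedup; in that sense you are doing more than the paper does, and your high-level picture is correct.

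One genuine imprecision worth flagging: your accounting for $|S|$ is off. You say the spanner size is ``bounded by the sum of bunch sizes plus the $k$ tree paths per vertex'' and that ``each stored edge can be charged to some pair $(v,w)$ with $w\in B(v)\cup\{p_0(v),\ldots,p_{k-1}(v)\}$.'' But a stored shortest path from $v$ to $w$ may have many edges, so charging one edge per pair does not work. The correct TZ argument builds $S$ as a union of shortest-path trees $T(w)$, one per sampled vertex $w$, where $T(w)$ spans the cluster $C(w)=\{v:w\in B(v)\}$; the number of edges is then $\sum_w(|C(w)|-1)\le\sum_w|C(w)|=\sum_v|B(v)|=O(k\,n^{1+1/k})$ by bunch/cluster duality. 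Your description (``edges on the stored shortest path from $v$ to $w$'') is compatible with this once organized by $w$ rather than by $v$, but as written the charging step is not valid.
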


\section{Path-Reporting Distance Oracle for Unweighted Graphs} \label{sec:Unweighted}

In this section we introduce a simple variant of our construction, tailored for unweighted graphs. We first apply a simple clustering, and store a BFS (Breadth First Search) tree for each cluster. We next apply the TZ PRDO on the resulting cluster-graph. Finally, each query $(u,v)$ is answered by taking the path in the cluster-graph between the clusters containing $u$ and $v$, and completing it to a path in $G$ using the BFS trees.


\subsection{Clustering}

We start by dividing the graph into clusters, using the following lemma.
\begin{lemma} \label{lemma:TreePartition}
Let $G=(V,E)$ be an undirected unweighted graph with $n$ vertices. Let $k\in[1,n]$ be some integer. There is an algorithm that finds a partition $V=\bigcup_{i=1}^qC_i$, such that every $C_i$ has a spanning tree $T_i=(C_i,E_i)$ with root $r_i$, where $E_i\subseteq E$ and for every $v\in C_i$, $d_{T_i}(v,r_i)\leq k$. In addition, the number of sets in this partition, $q$, is at most $\frac{n}{k}$.
\end{lemma}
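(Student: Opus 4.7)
The plan is to root a BFS tree $T$ of $G$ at an arbitrary vertex $r_0$ (using connectedness of $G$) and then build the clusters by a bottom-up, greedy sweep of $T$ in post-order. Throughout the sweep I maintain, for each processed vertex $v$, a flag declaring it either \emph{pending} or \emph{closed}, together with the set $A(v)$ of currently pending descendants of $v$ in $T$ (including $v$ itself). When $v$ is processed I set $A(v) = \{v\} \cup \bigcup_{c} A(c)$, where the union is over the still-pending children $c$ of $v$. If $|A(v)| \ge k$, I open a new cluster with vertex set $A(v)$, root $r = v$, and spanning tree equal to the sub-tree of $T$ induced on $A(v)$, marking every vertex of $A(v)$ as closed; otherwise I keep $v$ pending and push $A(v)$ up to $v$'s parent.

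After the root $r_0$ is processed, the possibly-nonempty leftover $R := A(r_0)$ (the ``remainder'') contains $r_0$ and is a connected sub-tree of $T$ with $|R| \le k-1$ (otherwise a cluster would have been opened at $r_0$). The hypothesis $k \le n$ forces at least one cluster to have been opened whenever $R$ is nonempty (else $|A(r_0)| = n \ge k$, contradiction), so there is a tree-edge $(u,c)$ in $T$ with $u \in R$ and $c \notin R$. Moreover, $c$ must itself be the root of some formed cluster $A(c)$: every proper ancestor of $c$ lies in $R$ and therefore remained pending, so $c$ cannot have been absorbed upward. I would then absorb $R$ into that cluster, replacing its vertex set by $R \cup A(c)$ and re-rooting it at $u$.

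The radius bound is established by a short induction on the processing order: as long as $v$ is pending, $|A(v)| \le k-1$, so the connected sub-tree of $T$ induced on $A(v)$ has depth at most $|A(v)|-1 \le k-2$ from $v$. Hence whenever a cluster is finally opened at some $v$, each pending child contributes a sub-tree of depth at most $k-2$, yielding overall depth at most $k-1$ from $v$. For the merged cluster rooted at $u$, distances into $R$ are bounded by $|R|-1 \le k-2$, while distances into $A(c)$ are at most $1 + (k-1) = k$; both satisfy the required bound $d_{T_i}(v, r_i) \le k$. Finally, every opened cluster has size at least $k$, and absorbing the remainder does not create a new cluster, so the total count is $q \le (n-|R|)/k \le n/k$.

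The main technical obstacle is the remainder handling: arguing that a tree-edge from $R$ into the closed part exists, that its lower endpoint is actually the root of a formed cluster (and not an interior vertex that was absorbed by an earlier cluster), and that re-rooting at $u$ still respects the depth-$\le k$ guarantee. Once the inductive depth bound on $A(v)$ is in place, the remaining steps---post-order bookkeeping of the $A(v)$ sets and the final counting argument---reduce to routine verification.
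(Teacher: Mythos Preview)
Your argument is correct, but it follows a genuinely different route from the paper's. The paper also works inside a single BFS tree $T$ rooted at $r$, but it peels clusters by \emph{depth} rather than by \emph{size}: it takes the deepest leaf $v$, walks up exactly $k$ edges to a vertex $r'$, and declares the whole subtree rooted at $r'$ to be a cluster. The radius bound is immediate (nothing below $r'$ is deeper than $v$), and the $v$--$r'$ path alone guarantees at least $k$ vertices; then it recurses on $T$ with that subtree removed. The recursion bottoms out when the remaining tree already has radius $\le k$, which becomes the final cluster with no separate remainder step.

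Your size-triggered post-order sweep gets the same bounds, but the mechanism is inverted: you enforce size first (open when $|A(v)|\ge k$) and then have to recover the radius bound from the invariant $|A(c)|\le k-1$ on pending children, plus an explicit remainder-merge at the end. The upward-closure of $R$, the fact that the boundary child $c$ is necessarily a cluster root, and the re-rooting bound at $u$ are all sound as you argue. The paper's proof is shorter because depth-peeling makes the radius automatic; your approach is closer to generic tree-clustering templates and would adapt more readily if one wanted, say, a different size threshold per cluster.
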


\begin{proof}

Fix some $r\in V$, and let $T=(V,E_T)$ be the BFS tree with $r$ as a root. The tree $T$ is actually the shortest paths tree from $r$ in $G$, and so the path from every $v\in V$ to $r$ in $T$ is of length exactly $d_G(v,r)$, i.e., $d_T(v,r)=d_G(v,r)$. If every vertex $v\in V$ satisfies $d_G(v,r)\leq k$, then we can return the trivial partition $\{V\}$, with the spanning tree $T$ and root $r$.

Otherwise, let $v$ be the furthest leaf of $T$ from $r$, that is, $v$ maximizes the length $d_T(v,r)$. We know that $d_T(v,r)>k$, and since $G$ is unweighted, there is a vertex $r'$ on the path in $T$ from $v$ to $r$, with $d_T(v,r')=k$. Denote by $T'$ the sub-tree of $T$ rooted at $r'$.

Let $u\in V$ be a vertex in $T'$. Since $r'$ is on the path from $u$ to $r$, and on the path from $v$ to $r$, we have
\[
d_{T'}(u,r')=d_T(u,r)-d_T(r',r)\leq d_T(v,r)-d_T(r',r)=d_T(v,r')=k~.
\]
Therefore, if $C$ is the set of vertices of $T'$, we can return $C$ as one of the sets in the desired partition, where its spanning tree is $T'$ and its root is $r'$. We then delete $C$ from $G$ and continue recursively.

Note that the tree $T'$ contains the path from $v$ to $r'$, which is of length $k$. Since $G$ is unweighted, that means that $T'$ contains at least $k$ vertices, and so does $C$. Hence, the number of vertices in the graph, after the deletion of $C$, is at most $n-k$. Notice also that the tree $T$ is still a tree, after the removal of $T'$, thus the remaining graph is still connected. As a result, we can assume that our algorithm recursively partitions the remaining graph into at most $\frac{n-k}{k}=\frac{n}{k}-1$ sets, with spanning trees and roots as desired. Together with the last set $C$, we obtain a partition into at most $\frac{n}{k}$ parts, with the wanted properties.

\end{proof}

Given the unweighted graph $G=(V,E)$ and the integer $k$, let $\mathcal{C}$ be a partition as in Lemma \ref{lemma:TreePartition}. For every $C\in\mathcal{C}$, let $T[C]$ and $r[C]$ be the spanning tree of $C$ and its root. We define a new graph $\mathcal{H}=(\mathcal{C},\mathcal{E})$ as follows.

\begin{definition} \label{def:ClustersGraph}
The graph $\mathcal{H}=(\mathcal{C},\mathcal{E})$ is defined as follows. The set $\mathcal{E}$ consists of all the pairs $\{C,C'\}$, where $C,C'\in\mathcal{C}$, such that there is an edge in $G$ between $C,C'$.

Given an edge $\{C,C'\}\in\mathcal{E}$, we denote by $e(C,C')$ the edge $\{x,y\}$ of $G$, where $\{x,y\}$ is some choice of an edge that satisfies $x\in C$, $y\in C'$.
\end{definition}

We denote by $F$ the forest that consists of the disjoint union of the trees $T[C]$, for every $C\in\mathcal{C}$. For a vertex $x\in V$, define $h(x)$ to be the height of $x$ in the tree $T[C]$ such that $x\in C$, and $p(x)$ its parent in this tree.

\subsection{Stretch Analysis}

Fix any cluster $C$, let $T=T[C]$ be its spanning tree with root $r=r[C]$. For any two vertices $a,b\in T$, the unique path between them is a sub-path of the union between the two paths from $a$ to $r$ and from $b$ to $r$. Both of these paths are of length at most $k$. Hence, the resulting path is of length at most $2k$, and this path is exactly the one that the algorithm from Lemma \ref{lemma:TreeRouting} returns.

\begin{lemma} \label{lemma:ExtractingAlgorithm2}
There is an algorithm that given two vertices $u,v\in V$, and a simple path $Q=(C_1,C_2,...,C_t)$ in the graph $\mathcal{H}$, such that $u$ is in $C_1$ and $v$ is in $C_t$, returns a path $P$ in $G$ between $u$ and $v$, with number of edges
\[
|P|\leq t\cdot(2k+1)~.
\]

The running time of the algorithm is proportional to the number of edges in the output path. The required information for the algorithm is the set $\{h(x),p(x)\}_{x\in V}$, and the set $\{e(C_j,C_{j+1})\}_{j=1}^{t-1}$.
\end{lemma}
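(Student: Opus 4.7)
The plan is to construct the path $P$ by concatenating $t$ intra-cluster tree paths with the $t-1$ inter-cluster edges $e(C_j,C_{j+1})$, invoking Lemma~\ref{lemma:TreeRouting} within each $T[C_j]$.

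First I would unpack the inter-cluster edges: for each $j\in\{1,\dots,t-1\}$, write $e(C_j,C_{j+1})=\{x_j,y_j\}$ with $x_j\in C_j$ and $y_j\in C_{j+1}$. Set $y_0:=u$ and $x_t:=v$ for uniformity. The output path will be
\[
P \;=\; P_1 \circ \{x_1,y_1\} \circ P_2 \circ \{x_2,y_2\} \circ \cdots \circ \{x_{t-1},y_{t-1}\} \circ P_t,
\]
where $P_j$ is the unique path in $T[C_j]$ from $y_{j-1}$ to $x_j$ (possibly empty if they coincide). Since $y_{j-1},x_j\in C_j$ for every $j$, and we have access to $h(\cdot)$ and $p(\cdot)$ on every vertex, I can compute each $P_j$ by calling the algorithm of Lemma~\ref{lemma:TreeRouting} on $T[C_j]$. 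The algorithm clearly produces a walk in $G$ from $u$ to $v$; since each $P_j$ is a simple path inside $C_j$ and the clusters are disjoint, and since $Q$ is a simple path in $\mathcal{H}$ so the clusters $C_1,\dots,C_t$ are pairwise distinct, the concatenation is in fact a simple path.

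For the length bound, every $P_j$ lies in the tree $T[C_j]$, which by Lemma~\ref{lemma:TreePartition} has radius at most $k$ around its root $r[C_j]$. Hence the unique tree path between any two vertices of $C_j$ has at most $2k$ edges, so $|P_j|\leq 2k$. Summing gives
\[
|P| \;\leq\; \sum_{j=1}^{t}|P_j| \;+\; (t-1) \;\leq\; 2kt + (t-1) \;\leq\; t(2k+1),
\]
as required.

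For the runtime, Lemma~\ref{lemma:TreeRouting} guarantees that the computation of $P_j$ takes time proportional to $|P_j|$, or $O(1)$ if $P_j$ is empty. Adding an $O(1)$ cost per cluster for reading $e(C_j,C_{j+1})$ and emitting the crossing edge, the total runtime is
\[
O\!\left(t + \sum_{j=1}^{t}|P_j|\right) \;=\; O(|P|),
\]
since $t\leq |P|+1$ whenever $P$ is nonempty (and the only case where $P$ is empty is the degenerate $t=1$, $u=v$ situation, which we can detect and handle in $O(1)$). The only inputs accessed by the procedure are the stored values $\{h(x),p(x)\}_{x\in V}$ (passed to Lemma~\ref{lemma:TreeRouting}) and the stored edges $\{e(C_j,C_{j+1})\}_{j=1}^{t-1}$, matching the stated information requirement. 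There is no real obstacle here; the only point to be slightly careful about is the bookkeeping of the endpoints $y_{j-1},x_j$ at the cluster boundaries and the degenerate empty-path cases, both of which are handled uniformly by the $y_0:=u$, $x_t:=v$ convention.
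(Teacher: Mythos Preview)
Your proof is correct and follows essentially the same approach as the paper: identical decomposition into intra-cluster tree paths $P_j$ joined by the crossing edges $\{x_j,y_j\}$, the same $y_0:=u$, $x_t:=v$ convention, the same $2k$ bound on each $|P_j|$ via the radius-$k$ property, and the same appeal to Lemma~\ref{lemma:TreeRouting} for the running time. You add a couple of extra (correct) remarks---simplicity of $P$ and the degenerate $t=1$ case---that the paper does not bother to spell out, but otherwise the arguments coincide.
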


\begin{proof}

Given the edges $\{C_j,C_{j+1}\}$, the set $\{e(C_j,C_{j+1})\}_{j=1}^{t-1}$ can be used to find $x_j,y_j\in V$ (vertices of the original graph $G=(V,E)$), such that $x_j\in C_j$, $y_j\in C_{j+1}$ and $\{x_j,y_j\}\in E$. Define also $y_0=u, x_t=v$. For every $j\in[1,t]$, using the set $\{h(x),p(x)\}_{x\in V}$, we can use Lemma \ref{lemma:TreeRouting} to find a path $P_j$ in $G$ between $y_{j-1}$ and $x_j$, with length at most $2k$. Finding all of these paths takes time that is proportional to the sum of lengths of these paths.

The returned path by this algorithm is
\[
P=P_1\circ\{x_1,y_1\}\circ P_2\circ\{x_2,y_2\}\circ\cdots\circ\{x_{t-1},y_{t-1}\}\circ P_t~.
\]
The time needed to report this path is $O(\sum_{j=1}^t|P_j|)=O(|P|)$. The length of this path is
\[
t-1+\sum_{j=1}^t|P_j|\leq t-1+t\cdot2k<t\cdot(2k+1)~.
\]
This concludes the proof of the lemma.

\end{proof}

\subsection{A PRDO for Unweighted Graphs}
We are now ready to introduce the construction of our small size PRDO.

\begin{theorem} \label{thm:NewUnweightedPRDO}
Let $G=(V,E)$ be an undirected unweighted graph with $n$ vertices, and let $k\in[1,\log n]$ be some integer parameter. There is a path-reporting distance oracle for $G$ with stretch $2k(2k+1)=O(k^2)$, query time $O(\log k)$ and size $O(n^{1+\frac{1}{k}})$.
\end{theorem}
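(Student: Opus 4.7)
The plan is to put together the three components already prepared in this section: the clustering from \lemmaref{lemma:TreePartition}, the cluster-graph $\mathcal{H}$ from \defref{def:ClustersGraph}, and the Thorup--Zwick oracle from \theoremref{thm:TZ}, with the extraction routine of \lemmaref{lemma:ExtractingAlgorithm2} serving as the glue. Specifically, I would first apply \lemmaref{lemma:TreePartition} with the given $k$ to obtain a partition $\mathcal{C}$ with $|\mathcal{C}|\le n/k$ and a rooted spanning tree $T[C]$ of depth $\le k$ inside every cluster $C$. I would then build $\mathcal{H}=(\mathcal{C},\mathcal{E})$ and treat it as an unweighted graph on $n/k$ vertices, and invoke \theoremref{thm:TZ} on $\mathcal{H}$ with the same parameter $k$, storing also its underlying spanner $S_{\mathcal{H}}$. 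The oracle itself stores (i) for each $x\in V$, the values $h(x)$, $p(x)$ and a pointer to the cluster containing $x$; (ii) the TZ oracle on $\mathcal{H}$; and (iii) for every edge of $S_{\mathcal{H}}$, the representative edge $e(C,C')\in E$ from \defref{def:ClustersGraph}.

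To answer a query $(u,v)$, I would locate the clusters $C_u,C_v\in\mathcal{C}$ using the pointers, call the TZ oracle on $\mathcal{H}$ to obtain a simple path $Q=(C_1,\ldots,C_t)$ with $C_1=C_u$ and $C_t=C_v$, and then feed $Q$ into the algorithm of \lemmaref{lemma:ExtractingAlgorithm2} to produce the output path $P$ in $G$. The TZ query takes $O(\log k)$ time by \theoremref{thm:TZ}, and the extraction time is proportional to $|P|$, so the total query time (excluding the output itself) is $O(\log k)$.

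For the size bound, the trees $\{T[C]\}$ together with the per-vertex data $(h(x),p(x))$ and cluster pointers use $O(n)$ words. The TZ oracle on $\mathcal{H}$, together with its underlying spanner and the corresponding representative edges $e(C,C')$, uses $O\!\left(k\cdot(n/k)^{1+1/k}\right)=O\!\left(n^{1+1/k}/k^{1/k}\right)=O(n^{1+1/k})$ words, dominating the total.

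The main step is the stretch analysis, which I would carry out as follows. Since $G$ is unweighted, a shortest $u$--$v$ path in $G$ consists of $d_G(u,v)$ edges, each of which either lies inside one cluster or crosses between two clusters; the crossing edges give a walk in $\mathcal{H}$ from $C_u$ to $C_v$ of length at most $d_G(u,v)$, hence $d_{\mathcal{H}}(C_u,C_v)\le d_G(u,v)$. Because $\mathcal{H}$ is unweighted and the TZ oracle has stretch $2k-1$, the returned path satisfies $t-1\le (2k-1)\,d_G(u,v)$, and then \lemmaref{lemma:ExtractingAlgorithm2} bounds
\[
|P|\;\le\;t\cdot(2k+1)\;\le\;\bigl((2k-1)\,d_G(u,v)+1\bigr)(2k+1)\;\le\;2k(2k+1)\,d_G(u,v),
\]
where the last inequality uses $d_G(u,v)\ge 1$. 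Since $G$ is unweighted, $|P|$ equals the weight of $P$ in $G$, giving the claimed stretch $2k(2k+1)=O(k^2)$. The one subtle point to verify carefully is the inequality $d_{\mathcal{H}}(C_u,C_v)\le d_G(u,v)$ — it relies on the fact that the graph is unweighted, which is why the weighted case treated in \sectionref{sec:NewPRDOSec} needs a different clustering.
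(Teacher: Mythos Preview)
Your proposal is correct and follows essentially the same approach as the paper: cluster via \lemmaref{lemma:TreePartition}, build $\mathcal{H}$, run the TZ oracle on $\mathcal{H}$, store the underlying spanner's representative edges together with per-vertex $(h,p,C)$ data, and answer queries by composing the TZ path with \lemmaref{lemma:ExtractingAlgorithm2}; the stretch, size, and query-time calculations match the paper's line by line. One minor remark: your closing comment that $d_{\mathcal{H}}(C_u,C_v)\le d_G(u,v)$ ``relies on the fact that the graph is unweighted'' is not quite the crux --- this inequality also holds in the weighted setting with the natural edge weights on $\mathcal{H}$; what actually breaks for weighted graphs is the per-cluster detour bound of \lemmaref{lemma:ExtractingAlgorithm2}, which is why \sectionref{sec:NewPRDOSec} needs a different clustering.
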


\begin{proof}

Denote by $TZ$ the PRDO from Theorem \ref{thm:TZ} with the parameter $k$, when constructed over the graph $\mathcal{H}=(\mathcal{C},\mathcal{E})$ (the clustering $\mathcal{C}$ is constructed with $k$ as the radius\footnote{Actually, by constructing the clustering $\mathcal{C}$ with radius $\frac{k}{8}$ instead of $k$, the stretch of our new PRDO decreases from $~4k^2$ to $k^2$. In the same way, one can achieve an arbitrarily small leading constant in the stretch.}). Let $S_{TZ}\subseteq\mathcal{E}$ be the set of the edges of the underlying spanner of $TZ$. In addition, for a given vertex $x\in V$, denote by $C(x)$ the vertex of $\mathcal{H}$ (i.e., cluster) that contains $x$. Recall also that $h(x)$ is the height of $x$ in the tree spanning $C(x)$ and $p(x)$ denotes the parent of $x$ in this tree.

We define our new PRDO for the undirected unweighted graph $G=(V,E)$. This PRDO contains the following information.
\begin{enumerate}
    \item The TZ PRDO.
    \item The set $\{e(C,C')\;|\;\{C,C'\}\in S_{TZ}\}$.
    \item The variables $\{h(x),p(x)\}_{x\in V}$.
    \item The variables $\{C(x)\}_{x\in V}$.
\end{enumerate}

Given a query $(u,v)\in V^2$, our PRDO queries $TZ$ on the vertices $C(u),C(v)$ of $\mathcal{H}$. Let $Q=(C(u)=C_1,C_2,...,C_t=C(v))$ be the resulting path, and note that all of its edges are in $S_{TZ}$. Then, using the sets $\{e(C_j,C_{j+1})\}_{j=1}^{t-1}\subseteq\{e(C,C')\;|\;\{C,C'\}\in S_{TZ}\}$ and $\{h(x),p(x)\}_{x\in V}$, we find a path $P$ in $G$ between $u,v$ using the algorithm from Lemma \ref{lemma:ExtractingAlgorithm2}. The resulting path $P$ has length of
\[
|P|\leq(|Q|+1)(2k+1)=t\cdot(2k+1)~,
\]
and it is returned as an output to the query.

Note that the path $Q$ that $TZ$ returned satisfies $|Q|=t-1\leq(2k-1)|R|$, where $R$ is the shortest path in $\mathcal{H}$ between $C(u)$ and $C(v)$. Let $P_{u,v}$ be the actual shortest path in $G$ between $u$ and $v$. Suppose that the vertices of $\mathcal{H}$ that $P_{u,v}$ passes through, by the order that it passes through them, are $(T_1,T_2,...T_q)$. By the definition of $\mathcal{H}$, there must be an edge $\{T_j,T_{j+1}\}$ in $\mathcal{H}$ for every $j\in[1,q-1]$. Hence, $R'=(T_1,T_2,...,T_q)$ is a path in $\mathcal{H}$, between $T_1=C(u)$ and $T_q=C(v)$, with length of at most $|P_{u,v}|=d_G(u,v)$. Since $R$ is the shortest path in $\mathcal{H}$ between $C(u)$ and $C(v)$, we have $|R|\leq d_G(u,v)$.

As a result,
\begin{eqnarray*}
|P|&\leq&t\cdot(2k+1)\\
&\leq&((2k-1)|R|+1)(2k+1)\\
&\leq&((2k-1)d_G(u,v)+1)(2k+1)\\
&=&(4k^2-1)d_G(u,v)+2k+1\\
&\leq&(4k^2+2k)d_G(u,v)=2k(2k+1)d_G(u,v)~.
\end{eqnarray*}
Thus, the stretch of our PRDO is at most $2k(2k+1)$.

The query time of our oracle consists of the time required for running a query of $TZ$, and of the time required for finding the path $P$. By Theorem \ref{thm:TZ} and Lemma \ref{lemma:ExtractingAlgorithm2}, the total time for these two computations is $O(\log k+|P|)$
which is $O(\log k)$ by our conventional PRDO notations.

As for the size of our PRDO, note that the variables $\{h(x),p(x)\}_{x\in V}$ (item $3$ in the description of the oracle) can be stored using only $O(n)$ space. The size of the set $\{e(C,C')\;|\;\{C,C'\}\in S_{TZ}\}$ equals to the size of $S_{TZ}$. Therefore, by Theorem \ref{thm:TZ}, the size of $TZ$, as well as the size of this set (items $1$ and $2$), is
\[
O(k|\mathcal{C}|^{1+\frac{1}{k}})~.
\]
Recall that by Lemma \ref{lemma:TreePartition}, the size of $\mathcal{C}$ is at most $\frac{n}{k}$. We conclude that the total size of our new PRDO is
\[
O(n+k\cdot(\frac{n}{k})^{1+\frac{1}{k}})=O(n^{1+\frac{1}{k}})~.
\]

\end{proof}

{\bf An Ultra-Compact PRDO for Unweighted Graphs.}
We can modify our PRDO for unweighted graphs, and get a PRDO of size $n+o(n)$. Here, the required storage for our PRDO is measured by \textit{words} - each of size at most $\log n$ bits. 
Decreasing the size of our PRDO is done at the cost of increasing the query time and (slightly) the stretch. The details are deferred to Appendix~\ref{sec:UltraCompactUnweighted}.

\section{Path-Reporting Distance Oracle for Weighted Graphs} \label{sec:NewPRDOSec}

In this section we devise our PRDO for weighted graphs. The basic framework is similar to the unweighted case: create a clustering of the graph, select a spanning tree for each cluster, and then apply the TZ PRDO over the cluster-graph. To answer a query $(u,v)$, we use the path in the cluster-graph between the clusters containing $u,v$, and complete it inside each cluster via the spanning trees edges.

The main differences from the unweighted setting are: 1) we use a more intricate clustering, {\em Bor\r{u}vka's clustering}, and 2) the trees spanning each cluster are not BFS trees, but are a subset of the MST (Minimum Spanning Tree) of the graph. These changes are needed in order to achieve the desired properties - that the number of clusters is small enough, while the stretch caused by going through the spanning trees of the clusters is controlled.


\subsection{Clustering via Bor\r{u}vka Forests} \label{sec:SpanningForests}

In this section we construct a clustering via a spanning forest of a graph. This construction is based on the well-known algorithm by Bor\r{u}vka for finding a minimum spanning tree in a graph. Similar constructions can be found in \cite{GH16,BBDHKLM17,BDGMN21}.

\begin{definition} \label{def:E_V}
Given an undirected weighted graph $G=(V,E)$, and a vertex $v\in V$, we denote by $e_v$ the minimum-weight edge among the adjacent edges to $v$ in the graph $G$. If there is more than one edge with this minimum weight, $e_v$ is chosen to be the one that is the smallest lexicographically.
\end{definition}

\begin{definition} \label{def:BoruvkaForest}
Given an undirected weighted graph $G=(V,E)$, the {\em Bor\r{u}vka Forest} of $G$ is the sub-graph $G'=(V,E')$ of $G$, where
\[
E'=\{e_v\;|\;v\in V\}~.
\]

Each connected component $T$ of $G'$ is called a {\em Bor\r{u}vka Tree}. The {\em root} of $T$ is chosen to be one of the adjacent vertices to the minimum-weight edge in $T$ (if there are several such minimum-weight edges, we pick the smallest one lexicographically, and the choice between its two adjacent vertices is arbitrary).
\end{definition}

To justify the use of the words "forest" and "tree", we prove the following lemma.
\begin{lemma} \label{lemma:BoruvkaForest}
The graph $G'$ is a forest. Moreover, if $T$ is a tree in $G'$, $x$ is a vertex of $T$, and $p(x)$ is $x$'s parent in $T$ (that is, the next vertex on the unique path from $x$ to the root of $T$), then $\{x,p(x)\}=e_x$.
\end{lemma}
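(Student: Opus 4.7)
The plan is to analyze the functional graph obtained by orienting each edge $e_v \in E'$ away from $v$, and to establish both parts of the lemma from its structure. I would first fix a total order $\prec$ on $E$ by weight with lexicographic tie-breaking, so that $e_v$ is the $\prec$-minimum edge of $G$ incident to $v$, and so every edge in $E'$ equals $e_w$ for at least one of its endpoints $w$.

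For the forest property, I would argue by contradiction: suppose $v_1, v_2, \ldots, v_\ell, v_1$ is a cycle in $G'$, and let $\{v_i, v_{i+1}\}$ be its $\prec$-maximum edge. Since this edge lies in $E'$, it must equal $e_{v_i}$ or $e_{v_{i+1}}$; say the former. Then the other cycle edge at $v_i$, namely $\{v_{i-1}, v_i\}$, is also in $E'$ but $\prec$-strictly smaller than $e_{v_i}$, contradicting the minimality of $e_{v_i}$ over all edges of $G$ incident to $v_i$.

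For the parent identity, I would focus on a single Bor\r{u}vka tree $T$. Orienting each $e_v$ (for $v \in T$) away from $v$ gives a functional digraph with exactly $|V(T)|$ arcs; since $T$ has only $|V(T)|-1$ undirected edges, exactly one pair $\{u,v\}$ satisfies $e_u = e_v = \{u,v\}$, while every other edge of $T$ carries a single arc. Reusing the $\prec$-max argument rules out any directed cycle of length $\ge 3$, so every vertex $x \in T \setminus \{u,v\}$ reaches this unique 2-cycle along a simple directed path. I would next show that $\{u,v\}$ is the $\prec$-minimum edge of $T$: if some other edge $\{a,b\} \in T$ were $\prec$-smaller and, say, $\{a,b\} = e_a$, then $e_b$ could not be anything but $\{a,b\}$ either (otherwise $\{a,b\}$ would beat $e_b$ at $b$), contradicting uniqueness of the 2-cycle pair. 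Therefore the root prescribed by \defref{def:BoruvkaForest} is one of $u, v$, and the functional arcs of $T$ coincide with the child-to-parent arcs in $T$ under that rooting, giving $e_x = \{x, p(x)\}$ for every non-root $x$, including the non-root endpoint of the 2-cycle itself.

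The main obstacle I anticipate is the careful bookkeeping around the 2-cycle: one must simultaneously use the tie-breaking order to exclude longer directed cycles, show that the 2-cycle sits on the $\prec$-minimum edge of $T$, and match this to the root-selection rule in the definition. Once these three facts are aligned, the identity $e_x = \{x, p(x)\}$ follows immediately from the functional orientation.
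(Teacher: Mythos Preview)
Your forest argument is essentially identical to the paper's: both take the $\prec$-maximum edge of a putative cycle and derive a contradiction. For the parent identity, however, the paper takes a shorter and more direct route: it inducts on the height $h(x)$. The base case $h(x)=1$ is a small case analysis on whether $\{x, v\}$ (with $v$ the root) is the $\prec$-minimum edge of $T$; the inductive step simply observes that $\{x, p(x)\} \in E'$ forces $\{x,p(x)\} \in \{e_x, e_{p(x)}\}$, and the induction hypothesis already gives $e_{p(x)} = \{p(x), p(p(x))\} \neq \{x, p(x)\}$, so $\{x,p(x)\} = e_x$. Your functional-digraph approach is also correct and is more structural---it identifies the unique 2-cycle globally and then reads off the in-arborescence orientation---but it is longer and carries more bookkeeping than the paper's two-line induction. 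One small sloppiness to fix: in your argument that the 2-cycle sits on the $\prec$-minimum edge, the clause ``otherwise $\{a,b\}$ would beat $e_b$ at $b$'' only works if $\{a,b\}$ is taken to be the $\prec$-minimum edge of $T$ (so that $e_b \in E(T)$ forces $\{a,b\} \preceq e_b$); merely assuming $\{a,b\} \prec \{u,v\}$ is not enough, since then minimality of $e_b$ at $b$ gives $e_b \preceq \{a,b\}$, the wrong direction. With that adjustment your argument goes through.
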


\begin{proof}

First, we prove that $G'$ is a forest. Seeking contradiction, assume that $G'$ contains a cycle $C$, and let $\{u,v\}$ be the heaviest edge in $C$ (if there are several edges with the largest weight, choose the one that is largest lexicographically). Note that since $u$ has at least one adjacent edge in $C$, that is lighter than $\{u,v\}$, then it cannot be that $e_u=\{u,v\}$ (recall that $e_u$ is the lightest edge adjacent to $u$). Similarly, it cannot be that $e_v=\{u,v\}$. Hence, we get a contradiction to the fact that $\{u,v\}$ is an edge of $G'$ - since every such edge must be the edge $e_v$ of one of its endpoints $v$.

Next, Let $T$ be a tree in $G'$, denote its root by $v$, and let $x\neq v$ be a vertex of $T$. We prove by induction over the height of $x$, $h(x)$, which is the number of edges in the unique path between $x$ and $v$ in $T$.

When $h(x)=1$, we have $p(x)=v$. We consider two cases. If $\{x,v\}$ is the minimum-weight edge in $T$, then by definition $e_x$ must be this edge, i.e., $e_x=\{x,p(x)\}$. If $\{x,v\}$ is not the minimum-weight edge in $T$, then $e_v$ must be some other adjacent edge to $v$, thus $e_v\neq\{x,v\}$. But then, the reason that $\{x,v\}$ is in $E'$ must be that $\{x,p(x)\}=\{x,v\}=e_x$.

For $h(x)>1$, notice that $h(p(x))=h(x)-1$, and therefore by the induction hypothesis, $\{p(x),p(p(x))\}=e_{p(x)}$. But then, the edge $\{x,p(x)\}$ cannot be equal to $e_{p(x)}$, so it must be equal to $e_x$.

\end{proof}

The following lemma bounds the number of connected components (i.e., trees) in $G'$.
\begin{lemma} \label{lemma:NumberOfBoruvkaTrees}
The number of connected components in $G'$ is at most $\frac{1}{2}|V|$.
\end{lemma}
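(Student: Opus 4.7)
The plan is to use the fact, established in \lemmaref{lemma:BoruvkaForest}, that $G'$ is a forest, and to observe that no connected component of $G'$ can be a single isolated vertex. Combining these two facts will immediately yield the stated bound on the number of components.

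First I would note that by \defref{def:E_V}, every vertex $v \in V$ has a well-defined edge $e_v$ (here I use the standing assumption that $G$ is connected with $|V| \ge 2$, so $v$ has at least one adjacent edge). Since $e_v \in E'$, the vertex $v$ is an endpoint of some edge of $G'$, hence $\deg_{G'}(v) \ge 1$. In particular, $G'$ has no isolated vertex.

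Next, since $G'$ is a forest, each of its connected components is a tree. Combined with the previous paragraph, each such tree has at least one edge and therefore at least two vertices. If the number of connected components is $k$, summing the vertex counts across the components gives $|V| \ge 2k$, which rearranges to $k \le \tfrac{1}{2}|V|$, as claimed.

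There is no real obstacle in this proof; the only subtle point is simply invoking the right prerequisites. Specifically, we rely on \lemmaref{lemma:BoruvkaForest} to know that $G'$ is acyclic (so components are trees rather than arbitrary connected subgraphs), and on $G$ being nontrivially connected so that $e_v$ exists for every $v$ (ruling out isolated components in $G'$).
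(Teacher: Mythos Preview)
Your proof is correct and follows essentially the same argument as the paper: both show that every vertex is incident to its edge $e_v\in E'$, so each connected component of $G'$ has at least two vertices, and then sum over components. (Your appeal to \lemmaref{lemma:BoruvkaForest} is harmless but actually unnecessary here---the bound only uses that no component is a singleton, not that components are trees.)
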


\begin{proof}
Let $C=(V_C,E'_C)$ be a connected component of $G'$, and let $x\in V_C$. The edge $e_x=\{x,y\}$ is in $C$, hence $y$ is also a vertex of $C$. In particular, $|V_C|\geq2$. Hence, if $\{C_i\}_{i=1}^t$ are the connected components of $G'$, then $|V_{C_i}|\geq2$ for every $i\in[1,t]$. Thus,
\[
\frac{1}{2}|V|=\frac{1}{2}\sum_{i=1}^t|V_{C_i}|\geq\frac{1}{2}\cdot2t=t~.
\]
\end{proof}

Next, we trim the trees in the Bor\r{u}vka forest so that each of them will be a \textit{star}, instead of a general tree. For this purpose, we will need the following definitions.

\begin{definition} \label{def:PartialBoruvkaForest}
Let $G'$ be the Bor\r{u}vka forest of $G$. For a vertex $x\in V$, denote by $h(x)$ the height of $x$ in the Bor\r{u}vka tree containing it. Define
\[
E'_0=\{\{a,b\}\in E'\;|\;\min\{h(a),h(b)\}=0\mod{2}\}~,
\]
\[
E'_1=\{\{a,b\}\in E'\;|\;\min\{h(a),h(b)\}=1\mod{2}\}~.
\]
We denote by $E''$ the largest set among these two.

Given an undirected weighted graph $G=(V,E)$, the {\em Partial Bor\r{u}vka Forest} of $G$ is the graph $G''=(V,E'')$.
\end{definition}

\begin{definition} \label{def:DirectedStar}
A {\em Star} is a rooted tree $S=(V_S,E_S)$ with root $v$ such that for every $x\in V_S\setminus\{v\}$, $\{x,v\}\in E_S$.
\end{definition}

\begin{lemma} \label{lemma:StarsForest}
The partial Bor\r{u}vka forest $G''=(V,E'')$ is a forest, where every tree is a star. In addition, if $S$ is a star in $G''$, $x$ is its root and $z\neq x$ is some other vertex of $S$, then $\{z,x\}=e_z$.
\end{lemma}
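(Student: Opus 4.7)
The plan is to leverage the parent-child structure from the earlier Lemma on Boruvka forests, which gives that every edge of $E'$ has the form $\{x, p(x)\}$ with $h(x) = h(p(x)) + 1$, and equals $e_x$ for the deeper endpoint $x$. In particular, for any edge $\{a,b\} \in E'$, the quantity $\min\{h(a), h(b)\}$ equals the height of the parent endpoint. With this in hand, the proof will proceed in three quick steps.

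First, I would observe that $G''$ is a forest, since $E'' \subseteq E'$ and $G'$ is already a forest by the earlier lemma, so $G''$ is a subgraph of a forest.

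Next, to establish the star structure, I would fix a vertex $v$ at height $\ell$ in its Boruvka tree and enumerate its incident edges in $E'$: the edge $\{v, p(v)\}$ to its parent (if $\ell \geq 1$), which has min-height $\ell - 1$; and the edges $\{v, c\}$ to each child $c$, which all have min-height $\ell$. Since $\ell - 1$ and $\ell$ have opposite parities, in each of $E'_0$ and $E'_1$ the vertex $v$ is incident \emph{either} only to its parent edge \emph{or} only to its children's edges, never both, with the outcome determined by the parity of $\ell$ and by whether $E'' = E'_0$ or $E'_1$. It follows that each connected component of $G''$ consists of a unique \emph{center} vertex (the one whose edges in $E''$ all go to children in the Boruvka tree) together with its direct Boruvka children — exactly a star. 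Isolated vertices are singleton stars by the vacuous condition in the definition. I would designate the root of each star to be this center, i.e., the vertex of smaller Boruvka height.

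Finally, for the identity $\{z, x\} = e_z$: by the root choice, every non-root vertex $z$ of a star rooted at $x$ is a child of $x$ in the Boruvka tree, so $p(z) = x$. The earlier Lemma on Boruvka forests then gives $\{z, x\} = \{z, p(z)\} = e_z$, as desired. The main obstacle will be keeping the parity case analysis in the second step compact — there are four subcases (parity of $\ell$ times choice between $E'_0$ and $E'_1$) — but they all collapse to the same alternation phenomenon once phrased in terms of the parent endpoint's height. Everything else follows routinely from already-established properties of the Boruvka forest.
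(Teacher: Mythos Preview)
Your proposal is correct and follows essentially the same approach as the paper. Both arguments hinge on the same parity observation: for any edge $\{x,p(x)\}\in E'$ one has $\min\{h(x),h(p(x))\}=h(p(x))=h(x)-1$, so the parent edge and the child edges at a fixed vertex lie in different sets $E'_0,E'_1$; the paper phrases the conclusion component-wise (take the vertex of minimal height in a component and show it is the center), whereas you phrase it vertex-wise (each vertex sees only parent or only children in $E''$), but the content is identical, and your final step invoking $\{z,p(z)\}=e_z$ matches the paper verbatim.
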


\begin{proof}

Notice that $G''$ is a sub-graph of the Bor\r{u}vka forest $G'$, hence $G''$ is also a forest. We assume that $E''=E'_0$, and the proof for the case where $E''=E'_1$ is symmetric.

Let $T$ be a tree in $G'$, with root $r$. 
Note that for any vertex $x\neq r$ we always have $h(x)=h(p(x))+1$, and thus $\min\{h(x),h(p(x))\}=h(p(x))=h(x)-1$. We conclude that if $h(x)$ is even, then $\{x,p(x)\}\notin E''$, and if $h(x)$ is odd, then $\{x,p(x)\}\in E''$.

Now let $S$ be a tree in $G''$, and let $x$ be the vertex in $S$ that has minimal $h(x)$. It cannot be that $h(x)$ is odd, otherwise $p(x)$ is connected to $x$ in $E''$, thus $p(x)$ is also in $S$ and has a smaller value of $h(p(x))=h(x)-1$. Therefore, $h(x)$ is even. By the discussion above we know that all of the children of $x$ in $T$ ($y$'s that satisfy $p(y)=x$) have an edge in $E''$ to $x$, but their children have no such edge. That is, $S$ is a star with $x$ as a root, where all the other vertices in $S$ are the children of $x$.

The last part of the lemma follows from the fact that we just proved, that the only other vertices in a star $S$ with a root $x$, are the children of $x$. By Lemma \ref{lemma:BoruvkaForest}, for every such child $z$, the edge $\{z,x\}=\{z,p(z)\}=e_z$.





\end{proof}

The following lemma bounds the number of trees in the partial Bor\r{u}vka forest of a graph.
\begin{lemma} \label{lemma:NumberOfStars}
The number of stars in $G''$ is at most $\frac{3}{4}|V|$.
\end{lemma}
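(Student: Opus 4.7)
The plan is to avoid explicitly identifying the star centers tree by tree, and instead exploit the fact that $G''$ is itself a forest, so its number of components (which equals the number of stars) can be read off directly from a count of edges. This sidesteps the slightly fiddly issue that, depending on whether $E''=E'_0$ or $E''=E'_1$, the root of each Bor\r{u}vka tree may or may not end up as a singleton star.

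First I would observe that $E''\subseteq E'$ by construction, so $G''$ is a subgraph of the Bor\r{u}vka forest $G'$, and hence itself a forest. For any forest on $|V|$ vertices with $m$ edges, the number of connected components is $|V|-m$, so the number of stars in $G''$ equals $|V|-|E''|$. The task thus reduces to giving a good lower bound on $|E''|$.

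Next I would argue that $E'_0$ and $E'_1$ partition $E'$: every edge $\{a,b\}\in E'$ has a well-defined value of $\min\{h(a),h(b)\}$, which is either even or odd, placing the edge in exactly one of the two sets. Since $E''$ is defined as the larger of $E'_0$ and $E'_1$, we obtain $|E''|\geq |E'|/2$. Combining this with the fact that $G'$ is a forest on $|V|$ vertices whose number of components $t$ satisfies $t\leq |V|/2$ by Lemma~\ref{lemma:NumberOfBoruvkaTrees}, we get $|E'|=|V|-t\geq |V|/2$, and therefore $|E''|\geq |V|/4$.

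Plugging back in gives the number of stars $=|V|-|E''|\leq |V|-|V|/4=\tfrac{3}{4}|V|$, as desired. There is no real obstacle here beyond the initial conceptual step of treating $G''$ as a forest and counting components via $|V|-|E''|$; once that observation is in place, the rest is a clean two-line chain of inequalities using the partition of $E'$ and the bound on the number of Bor\r{u}vka trees.
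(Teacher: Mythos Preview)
Your proposal is correct and follows essentially the same argument as the paper: both proofs use the forest identity (components $= |V|-|\text{edges}|$), the bound $|E'|\ge |V|/2$ from Lemma~\ref{lemma:NumberOfBoruvkaTrees}, and the fact that $E''$ is the larger of two sets covering $E'$ to conclude $|E''|\ge |V|/4$. Your observation that $E'_0,E'_1$ actually \emph{partition} $E'$ (rather than merely cover it, as the paper states) is a minor sharpening but does not change the argument.
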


\begin{proof}
Recall the Bor\r{u}vka forest $G'=(V,E')$. In every spanning forest $(V,F)$ of a graph $G=(V,E)$, the number of trees is exactly $|V|-|F|$. Thus, by Lemma \ref{lemma:NumberOfBoruvkaTrees}, we get
\[
|V|-|E'|\leq\frac{1}{2}|V|~,
\]
and therefore $|E'|\geq\frac{1}{2}|V|$. By the definition of $E''$, it contains at least half of these edges (since it equals to the larger set among two sets that cover the entire set $E'$). We conclude that $|E''|\geq\frac{1}{4}|V|$, and the number of trees in $G''$, which are stars, is
\[
|V|-|E''|\leq|V|-\frac{1}{4}|V|=\frac{3}{4}|V|~.
\]
\end{proof}

\subsubsection{A Hierarchy of Forests} \label{sec:HierarchyOfForests}



Given an undirected weighted graph $G=(V,E)$, we construct a sequence of forests $\{F_i=(V,E_i)\}_{i=0}^l$, where the integer parameter $l\geq0$ will be determined later. For $i=0$, define $E_0=\emptyset$. Then, for every $i\in[0,l]$, define the cluster-graph $\mathcal{H}_i=(\mathcal{C}_i,\mathcal{E}_i)$ as follows.

The set $\mathcal{C}_i$ is defined to be the set of the disjoint trees of the forest $F_i$. For every $T,T'\in\mathcal{C}_i$, denote by $e(T,T')$ the minimum-weight edge in $E$ among the edges between $T$ and $T'$. If there is no such edge, denote $e(T,T')=\bot$. Then define
\[
\mathcal{E}_i=\{\{T,T'\}\;|\;e(T,T')\neq\bot\}~.
\]
The weight of an edge $\{T,T'\}\in \mathcal{E}_i$ is defined to be the same as the weight of the edge $e(T,T')\in E$.

For any $i\in[0,l]$, given the graph $\mathcal{H}_i$, let $\mathcal{H}''_i=(\mathcal{C}_i,\mathcal{E}''_i)$ be the partial Bor\r{u}vka forest of $\mathcal{H}_i$. The graph $\mathcal{H}''_i$ is a disjoint union of stars. Let $\mathcal{S}$ be such star and let $T_0$ be its root.

Define the tree $Z$ in $G$ to be the tree that is formed by the union of the trees in $\mathcal{S}$ and the edges $e(T,T_0)$, for every $T\neq T_0$ in $\mathcal{S}$. The root of the tree $Z$ is defined to be the root of $T_0$. Finally, for any $i\in[0,l-1]$, the forest $F_{i+1}=(V,E_{i+1})$ is defined to be the disjoint union of the rooted trees $Z$ that are formed as was described, for all stars in $\mathcal{H}''_i$.

\begin{lemma} \label{lemma:NumberOfTrees}
For every $i\in[0,l]$, the forest $F_i$ has at most $(\frac{3}{4})^i|V|$ trees.
\end{lemma}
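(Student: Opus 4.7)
The plan is to prove the bound by induction on $i$. For the base case $i=0$, we have $E_0 = \emptyset$, so $F_0$ consists of $|V|$ singleton trees, matching $(3/4)^0 |V| = |V|$.

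For the inductive step, assume $F_i$ has at most $(3/4)^i |V|$ trees, equivalently $|\mathcal{C}_i| \leq (3/4)^i |V|$. The key observation is that the trees of $F_{i+1}$ are in one-to-one correspondence with the stars (including singletons) of the partial Bor\r{u}vka forest $\mathcal{H}''_i$: by construction, each star $\mathcal{S}$ in $\mathcal{H}''_i$ with root $T_0$ produces exactly one tree $Z$ of $F_{i+1}$ (taking the union of the trees in $\mathcal{S}$ together with the bridging edges $e(T,T_0)$ for every $T \neq T_0$ in $\mathcal{S}$), and conversely every tree of $F_{i+1}$ arises in this way from a unique star of $\mathcal{H}''_i$. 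Hence the number of trees of $F_{i+1}$ equals the number of stars of $\mathcal{H}''_i$.

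I would then apply Lemma~\ref{lemma:NumberOfStars} to the graph $\mathcal{H}_i$, which yields that the number of stars of $\mathcal{H}''_i$ is at most $\tfrac{3}{4}|\mathcal{C}_i|$. Combining with the inductive hypothesis gives that $F_{i+1}$ has at most $\tfrac{3}{4} \cdot (3/4)^i |V| = (3/4)^{i+1} |V|$ trees, closing the induction.

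The one point that needs verification is that Lemma~\ref{lemma:NumberOfStars} is actually applicable to $\mathcal{H}_i$: its proof (via Lemma~\ref{lemma:NumberOfBoruvkaTrees}) implicitly assumes that every vertex has at least one incident edge, so that each Bor\r{u}vka component has size at least two. This holds because $G$ is assumed connected and $F_i$ is a spanning forest of $G$, so whenever $|\mathcal{C}_i| \geq 2$, any two clusters are joined by an edge of $G$ via some sequence of clusters, making $\mathcal{H}_i$ connected and in particular free of isolated vertices. The degenerate case $|\mathcal{C}_i|=1$ (where $F_i$ is already a spanning tree of $G$ and the construction simply yields $F_{i+1}=F_i$ with a single tree) is handled trivially, provided $l$ is chosen so that $(3/4)^l |V| \geq 1$; this is the only real subtlety, and I expect it to pose no serious obstacle.
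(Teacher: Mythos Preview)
Your proposal is correct and follows essentially the same approach as the paper: induction on $i$, with the inductive step invoking Lemma~\ref{lemma:NumberOfStars} on $\mathcal{H}_i$ and using the bijection between stars of $\mathcal{H}''_i$ and trees of $F_{i+1}$. Your added discussion of why Lemma~\ref{lemma:NumberOfStars} applies (connectivity of $\mathcal{H}_i$ inherited from $G$) and the degenerate case $|\mathcal{C}_i|=1$ is extra care that the paper omits, but the core argument is identical.
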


\begin{proof}

We prove the lemma by induction over $i\in[0,l]$. For $i=0$, $F_0$ is defined to be the graph $(V,\emptyset)$, so the number of trees in $F_0$ is $|V|$ and the claim holds.

For $i>0$, recall the graphs $\mathcal{H}_{i-1}$ and $\mathcal{H}''_{i-1}$ that were used in the definition of $F_i$. By the induction hypothesis, $F_{i-1}$ consists of at most $(\frac{3}{4})^{i-1}|V|$ trees, which are exactly the vertices of $\mathcal{H}_{i-1}$. Then, by Lemma \ref{lemma:NumberOfStars}, the graph $\mathcal{H}''_{i-1}$ has at most $\frac{3}{4}\cdot(\frac{3}{4})^{i-1}|V|=(\frac{3}{4})^i|V|$ stars. The forest $F_i$ consists of a single tree $Z$ for every star $\mathcal{S}$ in $\mathcal{H}''_{i-1}$, thus the number of trees in $F_i$ is at most $(\frac{3}{4})^i|V|$, as desired.

\end{proof}

\subsection{Stretch Analysis} \label{sec:StretchAnalysisWeighted}

Due to space considerations, we only state here the main lemma that will be used for bounding the stretch of our PRDO, without proof. The proof of this lemma, as well as some other lemmata , appears in Appendix \ref{sec:str-w}.

In the next lemma, we use the notations $p_i(x)$ and $h_i(x)$ to denote the parent and the height of $x$ in the tree of $F_i$ that contains $x$.

\begin{lemma} \label{lemma:ExtractingAlgorithm}
There is an algorithm that given two vertices $u,v\in V$, and a simple path $Q=(S_1,S_2,...,S_t)$ in the graph $\mathcal{H}_i$, such that $u$ is in $S_1$ and $v$ is in $S_t$, returns a path $P$ in $G$ between $u$ and $v$, with
\[
w(P)\leq3^{i+1}(d_G(u,v)+w(Q))~.
\]

The running time of the algorithm is proportional to the number of edges in the output path $P$. The required information for the algorithm is the set $\{h_i(x),p_i(x)\}_{x\in V}$, and the set $\{e(S_j,S_{j+1})\}_{j=1}^{t-1}$.
\end{lemma}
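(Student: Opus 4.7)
The plan is for the algorithm to be the natural one: for each $j\in[1,t]$ it uses Lemma \ref{lemma:TreeRouting} on $F_i$ (with the given $h_i, p_i$) to extract the unique tree path $P_j$ in $S_j$ between $y_{j-1}$ and $x_j$, where $y_0:=u$ and $x_t:=v$, and returns $P = P_1 \circ \{x_1,y_1\} \circ P_2 \circ \cdots \circ \{x_{t-1},y_{t-1}\} \circ P_t$. Correctness and the running-time claim are immediate from Lemma \ref{lemma:TreeRouting}. It remains to bound $w(P)$, which I do by induction on $i$. The base case $i=0$ is trivial: each $S_j$ is a singleton, every $P_j$ is empty, and $w(P)=w(Q)\le 3(d_G(u,v)+w(Q))$.

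For the inductive step, recall from Section \ref{sec:HierarchyOfForests} that each $S_j$ is formed from a star $\mathcal{S}_j\subseteq \mathcal{H}''_{i-1}$ with root $T_0^{(j)}$ whose leaves are trees of $F_{i-1}$. Let $T_a^{(j)}, T_b^{(j)}$ denote the $F_{i-1}$-trees in $\mathcal{S}_j$ containing $y_{j-1}$ and $x_j$. The unique tree path $P_j$ in $S_j$ naturally decomposes into at most three tree paths inside $T_a^{(j)}, T_0^{(j)}, T_b^{(j)}$ glued by at most two star edges $e(T_a^{(j)},T_0^{(j)}), e(T_b^{(j)},T_0^{(j)})$. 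I define an auxiliary simple path $Q'$ in $\mathcal{H}_{i-1}$ by concatenating, for each $j$, the star traversal from $T_a^{(j)}$ to $T_b^{(j)}$ inside $\mathcal{S}_j$ (through $T_0^{(j)}$ when needed), and linking consecutive segments by the edges $\{T_b^{(j)}, T_a^{(j+1)}\}\in\mathcal{E}_{i-1}$. Simplicity of $Q'$ follows because the stars in $\mathcal{H}''_{i-1}$ partition the vertex set of $\mathcal{H}_{i-1}$, so each star is entered only in its own segment.

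The heart of the proof is the bound on $w(Q')$. By Lemma \ref{lemma:StarsForest} every star edge has weight $w(e_T)$ for its non-root endpoint $T$, and by Bor\r{u}vka minimality of $e_T$: when $T$ contains $y_{j-1}$ with $j>1$ or $x_j$ with $j<t$, then $T$ is incident (in $\mathcal{H}_{i-1}$) to another $F_{i-1}$-tree via the $Q$-edge $\{x_{j-1},y_{j-1}\}$ or $\{x_j,y_j\}$, giving $w(e_T)\le w(Q_{j-1})$ or $w(Q_j)$; while for the two boundary trees containing $u$ and $v$, the shortest $u$-$v$ path in $G$ must exit $T$, so $w(e_T)\le d_G(u,v)$. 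A simple telescoping bounds the total star-edge weight by $2d_G(u,v)+2w(Q)$, and adding the inter-segment edges (each of weight at most the corresponding $w(Q_j)$ since they are minimum-weight edges between sub-trees of $S_j, S_{j+1}$) yields $w(Q')\le 2d_G(u,v)+3w(Q)$. Under a consistent lexicographic tie-breaking rule for $e(\cdot,\cdot)$ across the hierarchy, one then verifies that $e(T_b^{(j)},T_a^{(j+1)})=\{x_j,y_j\}$ (since both are forced to be the lex-smallest minimum-weight edge in nested sets of candidates), so the level-$(i-1)$ algorithm run on $Q'$ outputs exactly the same $G$-path $P$ as our level-$i$ algorithm. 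Invoking the inductive hypothesis gives
\[
w(P)\le 3^i\bigl(d_G(u,v)+w(Q')\bigr)\le 3^i\cdot 3\bigl(d_G(u,v)+w(Q)\bigr) = 3^{i+1}\bigl(d_G(u,v)+w(Q)\bigr).
\]

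The main obstacle I anticipate is this last identification of the two algorithms' outputs: one must check that the unique tree path inside $S_j$ really does coincide, edge for edge, with the star-edges-plus-$F_{i-1}$-tree-paths object the level-$(i-1)$ algorithm would construct from $Q'$, and in particular that the $Q$-edges $\{x_j,y_j\}$ at level $i$ agree with the canonical inter-tree edges at level $i-1$. This hinges on a careful, consistent choice of tie-breaking for $e(T,T')$ throughout the forest hierarchy.
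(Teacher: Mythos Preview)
Your approach is correct and takes a genuinely different route from the paper. The paper does \emph{not} induct on $i$ inside this lemma. Instead it first proves three auxiliary lemmas (each by its own induction on $i$): \emph{ExternalEdges}, stating that if $a,c$ lie in a tree $T$ of $F_i$ and have neighbors $b,d\notin T$, then the tree path from $a$ through the root of $T$ to $c$ weighs at most $\frac{3^i-1}{2}(w(a,b)+w(c,d))$; and \emph{InternalEdges/InternalPaths}, stating that for $a,b$ in the same tree of $F_i$ the unique tree path between them weighs at most $(3^i-1)d_G(a,b)$. With these in hand, the paper bounds each $P_j$ directly---ExternalEdges handles $2\le j\le t-1$, while for $j=1$ (and symmetrically $j=t$) it picks the first vertex $u'$ on the true shortest $u$--$v$ path that is about to exit $S_1$, applies InternalPaths to the tree path $u\to u'$ and ExternalEdges to $u'\to x_1$, and sums.

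Your recursive reduction to an auxiliary path $Q'$ in $\mathcal{H}_{i-1}$ is more compact and bypasses these intermediate statements; the price is exactly the identification step you flag at the end. That step does go through: the unique path in a tree is independent of rooting, so the portion of $P_j$ lying in each $F_{i-1}$-subtree is precisely the tree path the level-$(i-1)$ algorithm would output, and the internal star edges of $S_j$ are by construction the canonical $e(T,T_0)$. The only delicate point is the inter-segment edges, where you need $e(T_b^{(j)},T_a^{(j+1)})=\{x_j,y_j\}$. Consistent lexicographic tie-breaking is one fix; a cleaner alternative is to strengthen the inductive hypothesis so that the algorithm may be handed \emph{any} crossing edge $f_j$ between consecutive clusters (not necessarily the minimum-weight one), with the conclusion $w(P)\le 3^{i+1}(d_G(u,v)+\sum_j w(f_j))$. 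Then you simply reuse the level-$i$ edges $\{x_j,y_j\}$ verbatim at level $i-1$, and the identification of the two outputs is automatic. The paper's route avoids this issue altogether because its ExternalEdges lemma only needs the endpoints $b,d$ to lie outside $T$, not that the edges be canonical minima.
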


\subsection{A PRDO for Weighted Graphs} \label{sec:NewPRDOSub}

We are now ready to introduce our small size path-reporting distance oracle.

\begin{theorem} \label{thm:NewPRDO}
Let $G=(V,E)$ be an undirected weighted graph with $n$ vertices, and let $k\geq1$ be an integer parameter. There is a path-reporting distance oracle for $G$ with stretch $k^{\log_{4/3}4}<k^{4.82}$, query time $O(\log k)$ and size $O(n^{1+\frac{1}{k}})$.
\end{theorem}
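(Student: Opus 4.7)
The plan is to lift the unweighted construction of \theoremref{thm:NewUnweightedPRDO} to weighted graphs, replacing the BFS clustering of \lemmaref{lemma:TreePartition} by a level of the Bor\r{u}vka hierarchy from \sectionref{sec:HierarchyOfForests}. Set the integer $l=\lceil\log_{4/3} k\rceil$, so that by \lemmaref{lemma:NumberOfTrees} the cluster graph $\mathcal{H}_l$ has at most $(3/4)^l n\le n/k$ vertices. The data structure will then store: (i) the TZ PRDO of \theoremref{thm:TZ} built on $\mathcal{H}_l$ with parameter $k$, together with its underlying spanner $S_{TZ}\subseteq\mathcal{E}_l$; (ii) for each spanner edge $\{S,S'\}\in S_{TZ}$, the corresponding original edge $e(S,S')\in E$; (iii) the parent $p_l(x)$ and height $h_l(x)$ in $F_l$ for every $x\in V$; and (iv) a pointer from every $x\in V$ to the tree of $\mathcal{C}_l$ containing it.

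Given a query $(u,v)$, first locate $S_u,S_v\in\mathcal{C}_l$ via (iv), then query the stored TZ PRDO on the pair $(S_u,S_v)$ to obtain a path $Q=(S_u=S_1,\ldots,S_t=S_v)$ lying in $S_{TZ}$, and finally invoke the extracting algorithm of \lemmaref{lemma:ExtractingAlgorithm} on input $(u,v,Q)$, supplying it with the data from (ii) and (iii), to produce a $u$-$v$ path $P$ in $G$.

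For the stretch bound, the key observation is that $d_{\mathcal{H}_l}(S_u,S_v)\le d_G(u,v)$: the shortest $u$-$v$ path in $G$ projects to a walk in $\mathcal{H}_l$ in which every cross-tree edge $\{x,y\}$ with $x\in S,y\in S'$ is replaced by $e(S,S')$, whose weight is at most $w(\{x,y\})$ by definition of $e(\cdot,\cdot)$. Combining this with the TZ guarantee $w(Q)\le(2k-1)\,d_{\mathcal{H}_l}(S_u,S_v)$ and with the bound $w(P)\le 3^{l+1}(d_G(u,v)+w(Q))$ from \lemmaref{lemma:ExtractingAlgorithm}, the total stretch is at most $2k\cdot 3^{l+1}$. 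Since $3^l=k^{\log_{4/3}3}$, this equals $O(k^{1+\log_{4/3}3})=O(k^{\log_{4/3}4})$, which is less than $k^{4.82}$. For size, items (iii) and (iv) cost $O(n)$, while items (i) and (ii) cost $O(k\,|\mathcal{C}_l|^{1+1/k})=O(k(n/k)^{1+1/k})=O(n^{1+1/k})$. For time, the TZ query costs $O(\log k)$ by \theoremref{thm:TZ} and the extracting algorithm runs in time proportional to $|P|$, which is absorbed by the standard PRDO convention.

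The principal obstacle is already packaged inside \lemmaref{lemma:ExtractingAlgorithm}, whose proof (deferred to the appendix) must control the cumulative cost of routing inside the nested spanning stars built by the hierarchy. The per-level blow-up factor of $3$ ultimately stems from the fact that every edge used at level $i+1$ is an $e_v$ edge of some vertex $v$, and is therefore no heavier than the two star edges it spans at level $i$, bounding inner-cluster detours by a geometric series. The clean final identity $1+\log_{4/3}3=\log_{4/3}4$ is what makes the target exponent match exactly.
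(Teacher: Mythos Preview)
Your proposal is correct and follows essentially the same approach as the paper: build the Bor\r{u}vka hierarchy, apply the TZ PRDO of \theoremref{thm:TZ} on $\mathcal{H}_l$, and answer queries via \lemmaref{lemma:ExtractingAlgorithm}, with the size and query-time analyses matching verbatim. The one discrepancy is your choice $l=\lceil\log_{4/3}k\rceil$: with it, your claim ``$3^l=k^{\log_{4/3}3}$'' is off by the ceiling, and the stretch $2k\cdot 3^{l+1}$ comes out as $O(k^{\log_{4/3}4})$ rather than the exact bound $k^{\log_{4/3}4}$ asserted in the theorem; the paper instead takes $l=\lfloor\log_{4/3}k\rfloor-2$, so that $2k\cdot 3^{l+1}\le 2k\cdot 3^{\log_{4/3}k-1}=\tfrac{2}{3}\,k^{\log_{4/3}4}<k^{\log_{4/3}4}$ holds without a hidden constant (the size bound still goes through since $|\mathcal{C}_l|\le\tfrac{64n}{27k}$).
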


\begin{proof}

Given the graph $G=(V,E)$, we construct the hierarchy of forests $\{F_i\}_{i=0}^l$ from Section \ref{sec:HierarchyOfForests}, where $l=\lfloor\log_{4/3}k\rfloor-2$. Consider the graph $\mathcal{H}_l=(\mathcal{C}_l,\mathcal{E}_l)$ that is defined in Section \ref{sec:HierarchyOfForests}. For every $x\in V$, denote by $h_l(x)$ the number of edges in the unique path from $x$ to the root of the tree of $F_l$ that $x$ belongs to. Let $p_l(x)$ be the parent of $x$ in that tree. Lastly, let $S(x)$ be the vertex of $\mathcal{H}_l$ (i.e., tree) that contains $x$.

Denote by $TZ$ the PRDO from Theorem \ref{thm:TZ} with the parameter $k$, when constructed over the graph $\mathcal{H}_l$. Let $S_{TZ}\subseteq\mathcal{E}_l$ be the set of edges of the underlying spanner of $TZ$.

Our new PRDO $D$ stores the following information.
\begin{enumerate}
    \item The oracle $TZ$.
    \item The set $\{e(T,T')\;|\;\{T,T'\}\in S_{TZ}\}$.
    \item The variables $\{h_l(x),p_l(x)\}_{x\in V}$. 
    \item The variables $\{S(x)\}_{x\in V}$.
\end{enumerate}

Given a query $(u,v)\in V^2$, the oracle $D$ queries $TZ$ on the vertices $S(u),S(v)$ of $\mathcal{H}_l$. Let $Q=(S(u)=S_1,S_2,...,S_t=S(v))$ be the resulting path, and note that all of its edges are in $S_{TZ}$. Then, using the sets $\{e(S_j,S_{j+1})\}_{j=1}^{t-1}\subseteq\{e(T,T')\;|\;\{T,T'\}\in S_{TZ}\}$ and $\{h_l(x),p_l(x)\}_{x\in V}$, the oracle $D$ uses the algorithm from Lemma \ref{lemma:ExtractingAlgorithm} to find a path $P$ in $G$ between $u,v$ with
\[
w(P)\leq3^{l+1}(d_G(u,v)+w(Q))~.
\]
The path $P$ is returned as an output to the query.
Note that the path $Q$ that $TZ$ returned satisfies
\[
w(Q)\leq(2k-1)w(R)~,
 \]
where $R$ is the shortest path in $\mathcal{H}_l$ between $S(u)$ and $S(v)$. Similarly to the proof of Theorem \ref{thm:NewPRDO}, it is easy to verify that $w(R)\leq d_G(u,v)$.

As a result,
\begin{eqnarray*}
w(P)&\leq&3^{l+1}(d_G(u,v)+w(Q))\\
&\leq&3^{l+1}(d_G(u,v)+(2k-1)w(R))\\
&\leq&3^{l+1}(d_G(u,v)+(2k-1)d_G(u,v))\\
&=&2k\cdot3^{l+1}d_G(u,v)\\
&\leq&2k\cdot3^{\log_{4/3}k-1}d_G(u,v)\\
&<&k^{1+\log_{4/3}3}d_G(u,v)=k^{\log_{4/3}4}d_G(u,v)~.
\end{eqnarray*}
Thus the stretch of our PRDO is smaller than $k^{\log_{4/3}4}$.

The query time of our oracle consists of the time required for running a query of $TZ$, and of the time required for computing the resulting path $P$ by Lemma \ref{lemma:ExtractingAlgorithm}. By Theorem \ref{thm:TZ} and Lemma \ref{lemma:ExtractingAlgorithm}, the total time for these two computations is $O(\log k+|P|)$,
which is $O(\log k)$ by our conventional PRDO notations.

As for the size of the PRDO $D$, note that the variables $\{h_l(x),p_l(x),S(x)\}_{x\in V}$ (items $3$ and $4$ in the description of $D$) can be stored using only $O(n)$ space. The size of the set $\{e(T,T')\;|\;\{T,T'\}\in S_{TZ}\}$ equals to the size of $S_{TZ}$. Therefore, by Theorem \ref{thm:TZ}, the size of $TZ$, as well as the size of this set (items $1$ and $2$ in the description of $D$), is
\[
O(k|\mathcal{C}_l|^{1+\frac{1}{k}})~.
\]
Recall that $\mathcal{C}_l$ is the set of vertices of $\mathcal{H}_l$. This set consists of the trees in the forest $F_l$. By Lemma \ref{lemma:NumberOfTrees}, the number of these trees is at most
\[
(\frac{3}{4})^l|V|=(\frac{3}{4})^{\lfloor\log_{4/3}k\rfloor-2}n\leq(\frac{3}{4})^{\log_{4/3}k-3}n=\frac{64n}{27k}~.
\]
Hence, the total size of our PRDO is
\[
O(n+k\cdot(\frac{64n}{27k})^{1+\frac{1}{k}})=O(n+n^{1+\frac{1}{k}})=O(n^{1+\frac{1}{k}})~.
\]

\end{proof}

{\bf An Ultra-Compact PRDO for Weighted Graphs.}
As in the unweighted version, the PRDO presented above can be fine-tuned into an ultra-compact PRDO (with size $n+o(n)$), at the cost of increasing the stretch and the query time. The details are deferred to Appendix~\ref{sec:ultra-compact-weighted}.


\section{Pairwise Path-Reporting Distance Oracle} \label{sec:PairwisePRDO}

Our construction of a pairwise PRDO relies on the pairwise spanner of Kogan and Parter, from their recent paper \cite{KP22} (in which the pairwise spanner is called a "near-exact preserver").
One of their useful results, that they also relied on for constructing their pairwise spanners, is the following lemma on hopsets. We first recall the definition of hopsets.

Let $G=(V,E)$ be a weighted undirected graph. For vertices $u,v\in V$ and some positive integer $\beta$, $d^{(\beta)}_G(u,v)$, denotes the weight of the lightest path between $u$ and $v$ in $G$, among the paths that have at most $\beta$ edges. An {\em $(\alpha,\beta)$-hopset} is a set $H\subseteq\binom{V}{2}$, such that for every two vertices $u,v\in V$,
\[
d_G(u,v)\le d^{(\beta)}_{G\cup H}(u,v)\leq\alpha\cdot d_G(u,v)~,
\]
where the weight of an edge $(x,y)\in H$ is defined to be $d_G(x,y)$.

The proof of the following lemma can be found in \cite{KP22}.

\begin{lemma}[Lemma 4.4 from \cite{KP22}] \label{lemma:HopsetsTradeoff}
Let $G=(V,E)$ be an undirected weighted graph on $n$ vertices, and let $k,D\geq1$ be integer parameters. For every $0<\epsilon<1$, there exists a $(1+\epsilon,\beta)$-hopset $H$ for $G$, where $\beta=O(\frac{\log k}{\epsilon})^{\log k}\cdot D$ and
\[
|H|=O\left(\left(\frac{n\log n}{D}\right)^{1+\frac{1}{k}}\right)~.
\]
\end{lemma}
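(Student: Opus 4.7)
The plan is to combine a sampling-based skeleton with a standard Thorup-Zwick / Cohen-style hopset on that skeleton. First, I would sample a set $A \subseteq V$ by including each vertex independently with probability $p = \Theta(\log n / D)$, so $|A| = O(n \log n / D)$ with high probability and a key ``hitting set'' event holds: every simple path in $G$ of more than $D$ hops contains a vertex of $A$. This follows from a Chernoff / union bound over the at most $n^2 \cdot D$ relevant subpaths of shortest paths.

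Second, I would construct a $(1+\epsilon, \beta_0)$-hopset $H$ whose edges lie in $\binom{A}{2}$ and that certifies $G$-distances between $A$-vertices, namely: for all $x,y \in A$, $d^{(\beta_0)}_{G \cup H}(x,y) \le (1+\epsilon) d_G(x,y)$, with $\beta_0 = O(\log k / \epsilon)^{\log k}$. This is obtained by applying a standard hopset construction of Thorup-Zwick / Cohen type on top of $A$: one defines a nested hierarchy $A = A_0 \supseteq A_1 \supseteq \dots \supseteq A_k = \emptyset$ by sampling at rate $|A|^{-1/k}$ at each level, and adds, for each $v \in A$ and each level $i$, edges from $v$ to its ``bunch'' $B_i(v)$ of $A_i$-vertices closer than its $(i+1)$-pivot, with weight $d_G(v,\cdot)$. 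A standard expected-bunch-size calculation gives $|H| = O(|A|^{1+1/k}) = O((n \log n / D)^{1+1/k})$, as required.

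For stretch and hop analysis, fix $u,v \in V$ and a shortest $u$-$v$ path $P$ in $G$. If $|P| \le D$, the path itself lies in $G \cup H$ with at most $D$ hops. Otherwise, by the hitting set property let $a_u$ be the first vertex of $A$ on $P$ (within $D$ hops of $u$) and $a_v$ the last (within $D$ hops of $v$). Both $a_u, a_v$ lie on $P$, so $d_G(u,a_u) + d_G(a_u,a_v) + d_G(a_v,v) = d_G(u,v)$. Concatenating the $G$-subpath from $u$ to $a_u$, the $\beta_0$-hop $(1+\epsilon)$-approximate path from $a_u$ to $a_v$ through $G \cup H$, and the $G$-subpath from $a_v$ to $v$, yields a path of at most $\beta_0 + 2D = O(\beta_0 \cdot D)$ hops and weight at most $(1+\epsilon)\, d_G(u,v)$, matching the target bound $\beta = O(\log k / \epsilon)^{\log k} \cdot D$.

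The hard part is the skeleton hopset analysis, i.e.\ proving the hopbound $\beta_0 = O(\log k / \epsilon)^{\log k}$. This requires the nested inductive argument from Thorup-Zwick / Cohen, in which one shows that at each of the $O(\log k)$ levels of the hierarchy the hop count blows up only by a factor $O(\log k / \epsilon)$, using an ``interval decomposition'' of the $a_u$-to-$a_v$ path into pieces that can be short-cut either by a bunch edge or by recursion on a sparser sub-hierarchy. A secondary technical subtlety is that the edges of $H$ connect only $A$-vertices, yet must approximate $d_G$-distances along paths passing through $V \setminus A$; this is handled by defining the bunch distances as true $G$-distances (computable by bounded-hop Dijkstra at construction time) rather than distances in any induced subgraph on $A$.
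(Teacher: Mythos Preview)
First, note that the paper does not itself prove this lemma: it is quoted from \cite{KP22}, with the explicit remark that ``the proof of the following lemma can be found in \cite{KP22}.'' So there is no in-paper argument to compare your sketch against.

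On the merits, your outline (hitting set $A$ of density $\Theta(\log n/D)$, skeleton hopset on $A$, then stitch to arbitrary endpoints via $\le D$-hop prefixes and suffixes) has the right architecture, but Step~2 contains a genuine gap. When you ``apply a standard hopset construction on top of $A$'' with bunch weights $d_G(\cdot,\cdot)$, what you are really constructing is a hopset for the complete metric graph $K_A$ on $A$; the guarantee you can inherit black-box is $d^{(\beta_0)}_{K_A\cup H}\le(1+\epsilon)d_G$ on $A\times A$, not $d^{(\beta_0)}_{G\cup H}\le(1+\epsilon)d_G$. The Elkin--Neiman / Huang--Pettie analysis that yields $\beta_0=O(\log k/\epsilon)^{\log k}$ proceeds by decomposing a shortest path in the \emph{underlying graph} into segments and shortcutting each; when the underlying graph is $K_A$ the base-case segments are single $K_A$-edges, which are neither $G$-edges nor $H$-edges. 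Replacing each such $K_A$-edge by an actual $G$-path gives no control whatsoever on the number of $G$-edges used. Your ``secondary subtlety'' paragraph addresses correctness of the \emph{weights}, not of the hop count.

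This is exactly why the $D$ in the lemma appears \emph{multiplicatively} rather than additively. The right move is to run the hopset construction on an auxiliary graph $G_A$ with vertex set $A$ and edges given by $D$-hop-bounded connections (weight $d_G^{(D)}(a,a')$); the hitting-set property makes $d_{G_A}=d_G$ on $A$, and each $G_A$-edge used in the $\beta_0$-hop analysis then unpacks to at most $D$ edges of $G$, giving $\beta_0\cdot D$ hops in $G\cup H$. Your additive bound $\beta_0+2D$ is strictly stronger than what the lemma asserts --- that alone should flag that the claimed $\beta_0$-hop guarantee in $G\cup H$ is too optimistic as stated.
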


Similarly to the constructions in \cite{KP22}, we now show how a pairwise PRDO can be produced, using the hopsets from Lemma \ref{lemma:HopsetsTradeoff}. We will use the notation $\beta(\epsilon,k)=O(\frac{\log k}{\epsilon})^{\log k}$ for brevity.

\begin{theorem} \label{thm:PairwisePRDO}
Let $G=(V,E)$ be an undirected weighted graph on $n$ vertices and let $\mathcal{P}\subseteq V^2$ be a set of pairs of vertices. For every $\epsilon\in(0,1)$, there exists a pairwise path-reporting distance oracle with stretch $1+\epsilon$, query time $O(1)$ and size
\[
O\left(\frac{\log n\cdot(\log\log n)^2}{\epsilon}\right)^{\log\log n}\cdot\tilde{O}\big(|\mathcal{P}|+n\big)=n^{o_\epsilon(1)}\cdot O(n+|\mathcal{P}|)~.
\]
\end{theorem}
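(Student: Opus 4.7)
The approach is to adapt the hopset-based pairwise spanner of \cite{KP22} to the path-reporting setting. I would build a $(1+\epsilon',\beta)$-hopset $H$ of $G$ via \lemmaref{lemma:HopsetsTradeoff}, explicitly store a $\beta$-hop approximately-shortest path for each pair in $\mathcal{P}$ (whose edges are either original $G$-edges or hopset edges), and then recursively expand hopset edges by the same device until every path consists of original edges only.

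More concretely, I would invoke \lemmaref{lemma:HopsetsTradeoff} with $k=\log n$, $D=1$, and stretch parameter $\epsilon' = \Theta(\epsilon/L)$ where $L=O(\log\log n)$ is the recursion depth fixed below, obtaining $|H|=\tilde{O}(n^{1+1/k})=\tilde{O}(n)$ and $\beta=\beta(\epsilon',k)=O(\log\log n/\epsilon')^{\log\log n}=n^{o_\epsilon(1)}$. For each $(u,v)\in\mathcal{P}$ I compute a $\beta$-hop path $\pi_{u,v}$ in $G\cup H$ of weight at most $(1+\epsilon')d_G(u,v)$ (via $\beta$ iterations of Bellman--Ford from $u$) and store it as an explicit array of $\beta$ edges, each tagged as either original or hopset. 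Then I iterate: the set $\mathcal{P}_1\subseteq H$ of hopset edges used in the stored paths becomes the pair set at level $1$; I build a new hopset $H_1$ via the same lemma at a strictly smaller distance scale, store $\beta$-hop approximate paths for every $(x,y)\in\mathcal{P}_1$, and continue. After $L$ levels the scales force every remaining edge to be an original $G$-edge, terminating the recursion.

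The stretch is bounded by $\prod_{i=0}^{L-1}(1+\epsilon')\le e^{L\epsilon'}\le 1+\epsilon$. For the size, level $0$ contributes $|\mathcal{P}|\cdot\beta$ words for the stored paths, and each subsequent level $i\ge 1$ contributes $|\mathcal{P}_i|\cdot\beta\le|H_{i-1}|\cdot\beta=\tilde{O}(n)\cdot n^{o_\epsilon(1)}=n^{1+o_\epsilon(1)}$; summing over the $L=O(\log\log n)$ levels gives the claimed bound $n^{o_\epsilon(1)}(n+|\mathcal{P}|)$. For query time, given $(u,v)\in\mathcal{P}$ I fetch $\pi_{u,v}$ in $O(1)$, walk its $\beta$ entries, and for each hopset edge follow a pointer into its stored $\beta$-hop decomposition at the next level. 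Each original $G$-edge of the final output path is produced with $O(1)$ amortized work, so the query time excluding $|P|$ is $O(1)$, as required by the paper's convention.

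The main technical obstacle is guaranteeing that the recursive expansion of hopset edges terminates in $O(\log\log n)$ levels, while the total number of pairs handled across all levels (and consequently the total size) stays bounded. I would handle this with the distance-scale structure underlying Lemma~\ref{lemma:HopsetsTradeoff}: at each level the responsible hopset handles a strictly smaller distance scale than at the previous level, and a standard weight-quantization reduction takes care of arbitrary aspect ratios. Balancing $k$, $\epsilon'$, and $L$ simultaneously so that $\beta$ stays $n^{o_\epsilon(1)}$, $|H_i|$ stays $\tilde{O}(n)$, and $(1+\epsilon')^L\le 1+\epsilon$ is the delicate calculation that produces the precise factor $O(\log n\cdot(\log\log n)^2/\epsilon)^{\log\log n}$ appearing in the theorem.
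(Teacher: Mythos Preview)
Your high-level strategy --- build a hierarchy of hopsets via \lemmaref{lemma:HopsetsTradeoff}, store $\beta$-hop paths explicitly, and recursively expand hopset edges level by level --- is exactly the paper's approach. The gap is in how you instantiate the hierarchy and, in particular, in how you force the recursion to terminate.

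You invoke the lemma with $D=1$ at every level and then appeal to an unspecified ``strictly smaller distance scale'' to argue termination in $L=O(\log\log n)$ rounds. But \lemmaref{lemma:HopsetsTradeoff} has no distance-scale parameter: the only knob besides $k$ and $\epsilon$ is $D$, which trades hopset size against hopbound. With $D=1$ throughout, every level's hopset has the same parameters, and an edge $(x,y)\in H_i$ may be re-expanded via hopset edges of $H_{i+1}$ spanning distances comparable to $d_G(x,y)$ --- nothing forces the recursion to bottom out. Your termination argument is therefore missing, and the assertion $L=O(\log\log n)$ is unjustified; note that it would yield a size bound \emph{without} the $\log n$ factor inside the parentheses, strictly better than what the theorem actually states, which is already a red flag.

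The paper's fix is to \emph{vary} $D$ across levels: it takes $D_i=\lceil n^{(k/(k+1))^i}\rceil$ with $k=\log n$, so that $D_0=n$ (making $H_0=\emptyset$ a trivial hopset) down to $D_l=2$. Each edge of $H_i$ is expanded using the \emph{previous} hopset $H_{i-1}$ with hopbound $\beta_{i-1}=\beta(\epsilon',k)\cdot D_{i-1}$, and since $H_0=\emptyset$ the recursion terminates unconditionally. This needs $l=\Theta(k\log\log n)=\Theta(\log n\cdot\log\log n)$ levels, not $O(\log\log n)$; consequently $\epsilon'=\Theta(\epsilon/l)$, and this factor of $l$ inside $\beta(\epsilon',k)=O(\log k/\epsilon')^{\log k}$ is precisely what produces the $\log n$ in the theorem's size bound. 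The size analysis then hinges on the telescoping choice of the $D_i$'s, which gives $D_{i-1}/D_i^{1+1/k}\le 2$ and balances the growing hopset size $|H_i|=O((n\log n/D_i)^{1+1/k})$ against the shrinking hopbound $\beta_{i-1}$.
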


\begin{proof}

Let $n=D_0>D_1>\cdots>D_l=2$ be some sequence of integer parameters that will be determined later. Denote $k=\log n$, and for a given $\epsilon\in(0,1)$, denote $\epsilon'=\frac{\epsilon}{2(l+1)}$. Let $H_0,H_1,...,H_l$ be the resulting hopsets when applying Lemma \ref{lemma:HopsetsTradeoff} on $\epsilon',k=\log n$ and $D_0,D_1,...,D_l$ respectively. That is, $H_i$ is a $(1+\epsilon',\beta_i)$-hopset with size $O((\frac{n\log n}{D_i})^{1+\frac{1}{k}})$, where $\beta_i=\beta(\epsilon',k)\cdot D_i$. For $i=0$, note that $\beta_i\geq n$, thus we can simply assume that $H_0=\emptyset$ (if it is not the case, we \textit{define} $H_0$ to be $\emptyset$, which is a $(1,n)$-hopset).

We now define our oracle $D$ to contain the following information. For every $i\in[1,l]$ and for every $(x,y)\in H_i$, let $Q_{x,y}$ be the shortest path in $G\cup H_{i-1}$ between $x,y$, among the paths that contain at most $\beta_{i-1}$ edges. In addition, for every $(x,y)\in\mathcal{P}$, let $P_{x,y}$ be the shortest path in $G\cup H_l$ between $x,y$, among the paths with at most $\beta_l$ edges. Our oracle $D$ stores all of these paths: $\bigcup_{i=1}^l\{Q_{x,y}\}_{(x,y)\in H_i}\cup\{P_{x,y}\}_{(x,y)\in\mathcal{P}}$.

Given a query $(u,v)\in\mathcal{P}$, we find the path $P_l=P_{u,v}\subseteq G\cup H_l$ that is stored in $D$. Then, we replace every edge $(x,y)\in H_l$ on $P_l$ by the corresponding path $Q_{x,y}\subseteq G\cup H_{l-1}$. The result is a path $P_{l-1}$ between $u,v$ in $G\cup H_{l-1}$. Every edge $(x,y)\in H_{l-1}$ on $P_{l-1}$ is then replaced by the path $Q_{x,y}\subseteq G\cup H_{l-2}$, to get a path $P_{l-2}$ between $u,v$ in $G\cup H_{l-2}$. We continue in the same way, until finally reaching to a path $P_0$ between $u,v$ in the graph $G\cup H_0=G$. We return $P_0$ as an output to the query.

By the hopset property, we know that
\[
w(P_l)=w(P_{u,v})=d^{(\beta_l)}_{G\cup H_l}(u,v)\leq(1+\epsilon')d_G(u,v)~.
\]
Similarly, every $(x,y)\in P_l$ that is also in $H_l$, is replaced with the path $Q_{x,y}$, that has a weight of
\[
w(Q_{x,y})=d^{(\beta_{l-1})}_{G\cup H_{l-1}}(x,y)\leq(1+\epsilon')d_G(x,y)=(1+\epsilon')w(x,y)~.
\]
Thus, the resulting path $P_{l-1}$ has a weight of at most $1+\epsilon'$ times the weight of $P_l$, that is
\[
w(P_{l-1})\leq(1+\epsilon')w(P_l)\leq(1+\epsilon')^2d_G(u,v)~.
\]
Proceeding in the same way, we conclude that $w(P_0)\leq(1+\epsilon')^{l+1}d_G(u,v)$.
Hence, the stretch of our distance oracle is
\[
(1+\epsilon')^{l+1}=(1+\frac{\epsilon}{2(l+1)})^{l+1}\leq e^{\frac{\epsilon}{2}}\leq1+\epsilon~.
\]

For analysing the query time of our distance oracle, we can think of the query algorithm as a single pass on the path $P_l$, where every time that an edge of $H_l$ is reached, we replace it with the appropriate path $Q_{x,y}$, and continue inside $Q_{x,y}$ recursively. Since every step produces an edge that will appear in the output path, the query time is proportional to this output path. Observe, however, that the resulting path is actually a walk, and not necessarily a simple path. By our convention of writing the query time of PRDOs, this query time is $O(1)$.

Lastly, we analyse the size of our pairwise PRDO. Note that by their definitions, the paths $P_{x,y}$, for every $(x,y)\in\mathcal{P}$ are of length at most $\beta_l$. Similarly, the length of $Q_{x,y}$, for $(x,y)\in H_i$ is at most $\beta_{i-1}$. Therefore, the total space required for storing these paths is at most
\begin{eqnarray*}
|\mathcal{P}|\cdot\beta_l+\sum_{i=1}^l|H_i|\cdot\beta_{i-1}&=&|\mathcal{P}|\cdot\beta(\epsilon',k)\cdot D_l+\sum_{i=1}^lO\left(\left(\frac{n\log n}{D_i}\right)^{1+\frac{1}{k}}\right)\cdot\beta(\epsilon',k)\cdot D_{i-1}\\
&=&\beta(\epsilon',k)\cdot\left(|\mathcal{P}|\cdot2+\sum_{i=1}^lO\left(\left(\frac{n\log n}{D_i}\right)^{1+\frac{1}{k}}\right)\cdot D_{i-1}\right)\\
&=&\beta(\epsilon',k)\cdot O\big(|\mathcal{P}|+(n\log n)^{1+\frac{1}{k}}\sum_{i=1}^l\frac{D_{i-1}}{D_i^{1+\frac{1}{k}}}\big)\\
&=&O\left(\frac{\log k}{\epsilon/2l}\right)^{\log k}\cdot O\big(|\mathcal{P}|+(n\log n)^{1+\frac{1}{\log n}}\sum_{i=1}^l\frac{D_{i-1}}{D_i^{1+\frac{1}{k}}}\big)\\
&=&O\left(\frac{l\cdot\log k}{\epsilon}\right)^{\log k}\cdot O\big(|\mathcal{P}|+n\log n\cdot\sum_{i=1}^l\frac{D_{i-1}}{D_i^{1+\frac{1}{k}}}\big)\\
&=&O\left(\frac{l\cdot\log k}{\epsilon}\right)^{\log k}\cdot\tilde{O}\big(|\mathcal{P}|+n\cdot\sum_{i=1}^l\frac{D_{i-1}}{D_i^{1+\frac{1}{k}}}\big)~.
\end{eqnarray*}

For making the last term small, we choose $D_i=\left\lceil n^{(\frac{k}{k+1})^i}\right\rceil$, and thus\newline
$
\frac{D_{i-1}}{D_i^{1+\frac{1}{k}}}\leq\frac{n^{(\frac{k}{k+1})^{i-1}}+1}{n^{(\frac{k}{k+1})^i\cdot(1+\frac{1}{k})}}=\frac{n^{(\frac{k}{k+1})^{i-1}}+1}{n^{(\frac{k}{k+1})^{i-1}}}\leq2~.
$
For this choice of $D_i$, since we want $D_l$ to be $2$, we must have $n^{(\frac{k}{k+1})^l}\leq2$, that is, $l\geq\log_{\frac{k+1}{k}}(\log n)$. Notice that
$
\log_{\frac{k+1}{k}}(\log n)=\frac{\log\log n}{\log(1+\frac{1}{k})}\leq\frac{\log\log n}{\log(2^\frac{1}{k})}=k\log\log n~,
$
thus we can choose $l=\lceil k\log\log n\rceil=\lceil\log n\cdot\log\log n\rceil$.

In conclusion, the size of our pairwise PRDO is at most
\begin{eqnarray*}
O\left(\frac{l\cdot\log k}{\epsilon}\right)^{\log k}\cdot\tilde{O}\big(|\mathcal{P}|+n\cdot\sum_{i=1}^l\frac{D_{i-1}}{D_i^{1+\frac{1}{k}}}\big)&=&O\left(\frac{l\cdot\log k}{\epsilon}\right)^{\log k}\cdot\tilde{O}\big(|\mathcal{P}|+n\cdot\sum_{i=1}^l2\big)\\
&=&O\left(\frac{l\cdot\log k}{\epsilon}\right)^{\log k}\cdot\tilde{O}\big(|\mathcal{P}|+l\cdot n\big)\\
&=&O\left(\frac{\log n\cdot(\log\log n)^2}{\epsilon}\right)^{\log\log n}\cdot\tilde{O}\big(|\mathcal{P}|+n\big)\\
&=&n^{o_\epsilon(1)}\cdot O\big(|\mathcal{P}|+n\big)\\
\end{eqnarray*}

\end{proof}

{\bf Acknowledgement.} The second-name author would like to thank Michael Elkin, who also suggested that the pairwise spanner of \cite{KP22} might be extended to a pairwise PRDO, and that a PRDO for unweighted graphs with size $O\left(n^{1+\frac{1}{k}}\right)$ and stretch $O(k^2)$ can be achieved.

\bibliography{hopset}

\newpage
\appendix

\section{An Ultra-Compact PRDO for Unweighted Graphs} \label{sec:UltraCompactUnweighted}
\begin{theorem} \label{thm:UltraCompactPRDO}
Let $G=(V,E)$ be an undirected unweighted graph with $n$ vertices, and let $k\in[1,\log n]$ be some integer parameter. For every $\epsilon\in[\frac{1}{\log n},\frac{k}{\log k}]$, There is a path-reporting distance oracle for $G$ with stretch $O(\frac{k^2}{\epsilon})$, query time $O(\frac{k}{\epsilon})$ and size $n+\epsilon n^{1+\frac{1}{k}}+o(n)$.

In particular, for $k=\log n$ and $\epsilon=o(1)$, we get a PRDO with stretch $O(\frac{\log^2n}{\epsilon})$, query time $O(\frac{\log n}{\epsilon})$ and size $n+o(n)$.
\end{theorem}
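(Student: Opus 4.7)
The plan is to modify the construction of \theoremref{thm:NewUnweightedPRDO} by applying a coarser clustering, so that the TZ component on the cluster-graph becomes strictly sublinear while the remaining data fits into $n+o(n)$ words via a compact encoding. Concretely, I would invoke \lemmaref{lemma:TreePartition} with radius parameter $k/\epsilon$ in place of $k$, producing a partition into $q\le n\epsilon/k$ clusters, each spanned by a rooted tree of depth at most $k/\epsilon$. Then I would build the cluster-graph $\mathcal{H}=(\mathcal{C},\mathcal{E})$ as in \defref{def:ClustersGraph} and construct the TZ PRDO of \theoremref{thm:TZ} on $\mathcal{H}$ with parameter $k$. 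Queries are answered exactly as in \theoremref{thm:NewUnweightedPRDO}: call TZ on $(S(u),S(v))$ to obtain a cluster path $Q=(C_1,\dots,C_t)$, then unfold it to a path in $G$ via \lemmaref{lemma:ExtractingAlgorithm2}, applied now with the new cluster radius $k/\epsilon$.

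For the stretch, the analysis in \theoremref{thm:NewUnweightedPRDO} transfers verbatim with $k/\epsilon$ in place of the cluster radius: TZ returns a cluster-graph path of length at most $(2k-1)\cdot d_G(u,v)$, each cluster unfolds to at most $2(k/\epsilon)+1$ edges through its depth-$k/\epsilon$ spanning tree, and the output path has length at most $(2k-1)(2k/\epsilon+1)\cdot d_G(u,v)=O(k^2/\epsilon)\cdot d_G(u,v)$. The query time is $O(\log k)$ for the TZ call plus $O(k/\epsilon)$ overhead for navigating the compact per-vertex representation inside the clusters, giving $O(k/\epsilon)$ overall.

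For the size, the TZ PRDO on $\mathcal{H}$ and the set $\{e(C,C')\;|\;\{C,C'\}\in S_{TZ}\}$ together occupy $O(k\cdot q^{1+1/k})=O(\epsilon\cdot n^{1+1/k})$ words. The tree data $\{h(x),p(x)\}_{x\in V}$ is stored using within-cluster local identifiers, so each entry uses $O(\log(k/\epsilon))$ bits; under the stipulated range $\epsilon\in[1/\log n,\,k/\log k]$ one has $k/\epsilon\le\log^{O(1)}n$, so $\log(k/\epsilon)=O(\log\log n)=o(\log n)$ bits per entry, and hence $o(n)$ words in total. The cluster-of-vertex map $\{S(x)\}_{x\in V}$ takes at most $n\lceil\log q\rceil\le n\log n$ bits, i.e., at most $n$ words. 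Summing yields $n+\epsilon\cdot n^{1+1/k}+o(n)$ words; for $k=\log n$ and $\epsilon=o(1)$ the middle term collapses to $o(n)$, giving the promised ultra-compact size $n+o(n)$.

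The main obstacle is balancing three competing quantities: the number of clusters $q$, which must be small enough to make the TZ size $O(\epsilon\cdot n^{1+1/k})$; the cluster radius $k/\epsilon$, which controls both the stretch and the per-query tree-navigation cost; and the bit-length $\log(k/\epsilon)$ of the local per-vertex data, which must be $o(\log n)$ so that the tree auxiliaries fit in $o(n)$ words. The coarser clustering is the lever that reconciles these, since enlarging the clusters simultaneously shrinks the TZ footprint and the local-identifier length. The delicate part is verifying that the stipulated range of $\epsilon$ places $k/\epsilon$ in the polylogarithmic regime, and carrying out the bit-level accounting so that only the $S$-map contributes the leading $n$-word term.
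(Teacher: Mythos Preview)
Your overall architecture is right, but the space accounting has a genuine gap. You claim that $p(x)$ can be stored using within-cluster local identifiers of $O(\log(k/\epsilon))$ bits, but \lemmaref{lemma:TreePartition} bounds only the \emph{radius} of each cluster by $k/\epsilon$, not its \emph{size}. A cluster can be an arbitrarily wide shallow tree (e.g., a star on $\Theta(n)$ vertices has radius $1$), so a local identifier for $p(x)$ may require up to $\log n$ bits, not $\log(k/\epsilon)$. Worse, even if local parent labels were short, a PRDO must output global vertex names, so you would still need a local-to-global translation table of $n$ words on top of the $n$ words you already spend on $\{S(x)\}$. Your budget then becomes $2n+o(n)$, not $n+o(n)$.

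The paper resolves this by making the opposite choice of which table to keep: it stores $p(x)$ as a \emph{global} vertex name (this is the unavoidable $n$-word term) and \emph{drops $S(x)$ entirely}. At query time it recovers $S(u)$ and $S(v)$ by climbing the parent pointers from $u$ and from $v$ to their respective roots, which takes $O(k/\epsilon)$ steps; this is exactly where the $O(k/\epsilon)$ query overhead comes from. The only compactly stored field is $h(x)$, which is genuinely bounded by the radius and hence fits in $O(\log(k/\epsilon))=o(\log n)$ bits. A minor secondary point: to get the stated size term $\epsilon n^{1+1/k}$ with leading constant $1$ (rather than $O(\epsilon n^{1+1/k})$), the paper sets the cluster radius to $2ck/\epsilon$ where $c$ is the constant hidden in \theoremref{thm:TZ}, so that $TZ$ together with the edge set $\{e(C,C')\}$ fits in exactly $\epsilon n^{1+1/k}$ words.
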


\begin{proof}

Recall that by Theorem \ref{thm:TZ}, every graph with $n$ vertices has a PRDO of size $O(kn^{1+\frac{1}{k}})$, query time $O(\log k)$ and stretch $2k-1$. Let $c$ be the constant such that this PRDO is always of size at most
\begin{equation} \label{eq:TZSize}
    ckn^{1+\frac{1}{k}}~.
\end{equation}
Let $\mathcal{C}$ be the clustering from Lemma \ref{lemma:TreePartition}, but with radius $\frac{2ck}{\epsilon}$ instead of $k$. That is, for every cluster $C\in\mathcal{C}$ there is a spanning tree $T[C]$ with a root $r[C]$, such that
\[
\forall_{v\in C}\;d_{T[C]}(v,r[C])\leq\frac{2ck}{\epsilon}~.
\]
In addition, the number of clusters in $\mathcal{C}$ is at most $\frac{n}{2ck/\epsilon}=\frac{\epsilon n}{2ck}$.

We repeat the same construction as the one in Theorem \ref{thm:NewUnweightedPRDO}: Let $TZ$ be the oracle from Theorem \ref{thm:TZ} with parameter $k$, when constructed over $\mathcal{H}=(\mathcal{C},\mathcal{E})$ (see the definition of $\mathcal{H}$ in Definition \ref{def:ClustersGraph}). Let $S_{TZ}$ be its underlying spanner. Our ultra-compact PRDO stores $TZ$, the set $\{e(C,C')\;|\;(C,C')\in S_{TZ}\}$ and the variables $\{p(x),h(x)\}_{x\in V}$.

Note that in contrary to our construction from Theorem \ref{thm:NewUnweightedPRDO}, here we do not store $C(x)$, the cluster of $\mathcal{C}$ that contains $x$, for every $x\in V$. Thus, we have to modify the query algorithm so that it will not use this information. Given $(u,v)\in V^2$, the query algorithm first follows the pointers $p()$ from $u$ and from $v$ until it reaches the roots of the spanning trees of $C(u)$ and $C(v)$. Without loss of generality, we assume that these roots are the representatives of the corresponding clusters in the graph $\mathcal{H}$. We then proceed in the same way as the original query algorithm.

Due to the change in the query algorithm, we now have an additional term of \newline$O(d(u,r[C(u)])+d(v,r[C(v]))$ in the query time. Since $\mathcal{C}$ is a clustering of radius $\frac{2ck}{\epsilon}$, this term is $O(\frac{2ck}{\epsilon})=O(\frac{k}{\epsilon})$. Besides that, the rest of the query algorithm is not affected by the changes in the PRDO, thus has the same running time of $O(\log k)$. We get a total query time of $O(\log k+\frac{k}{\epsilon})=O(\frac{k}{\epsilon})$, since $\epsilon\leq\frac{k}{\log k}$.

Note that since we replaced $k$ by $\frac{2ck}{\epsilon}$ as the radius of the clustering $\mathcal{C}$, we get in Lemma \ref{lemma:ExtractingAlgorithm2} that given a path $(C_1,...,C_t)$ in $\mathcal{H}$, the corresponding path $P$ in $G$ satisfies
\[
|P|\leq t\cdot(\frac{4ck}{\epsilon}+1)~.
\]
See Lemma \ref{lemma:ExtractingAlgorithm2} for more details. Therefore, following the same proof as of Theorem \ref{thm:NewUnweightedPRDO}, we conclude that the stretch of this new PRDO is
\[
O(k\cdot\frac{2ck}{\epsilon})=O(\frac{k^2}{\epsilon})~.
\]

Finally, we compute the size of our new PRDO more carefully. First, by (\ref{eq:TZSize}), $TZ$ and the underlying spanner $S_{TZ}$ each have size at most
\[
c\cdot k\cdot\left(\frac{\epsilon n}{2ck}\right)^{1+\frac{1}{k}}\leq\frac{1}{2}\epsilon n^{1+\frac{1}{k}}~.
\]
Hence, their total size is at most $\epsilon n^{1+\frac{1}{k}}$.

Next, notice that $h(x)$, for every $x\in V$ is the distance from $x$ to the root of $C(x)$. By the limitation on the radius of the cluster $C(x)$, we know that this distance $h(x)$ is always at most $\frac{2ck}{\epsilon}$, thus storing it takes $\log\frac{2ck}{\epsilon}=\log\frac{k}{\epsilon}+O(1)$ \textit{bits} of space. Since we measure the storage size by \textit{words}, we divide by $\log n$:
\[
\frac{1}{\log n}\cdot\left(\log\frac{k}{\epsilon}+O(1)\right)\leq\frac{1}{\log n}\cdot\left(\log\left(\frac{\log n}{\epsilon}\right)+O(1)\right)=o(1)~.
\]
In the last step we used the fact that $\epsilon\geq\frac{1}{\log n}$.

Lastly, notice that storing the variables $\{p(x)\}_{x\in V}$ requires storing a single word for every vertex $x\in V$. We conclude that the total number of words in our PRDO is
\[
\epsilon n^{1+\frac{1}{k}}+\sum_{x\in V}1+\sum_{x\in V}o(1)=n+\epsilon n^{1+\frac{1}{k}}+o(n)~,
\]
as desired.

\end{proof}

\section{An Ultra-Compact PRDO for Weighted Graphs}\label{sec:ultra-compact-weighted}

We start with the following lemma.

\begin{lemma} \label{lemma:UnweightedRadius}
For every choice of the parameter $l$, the trees of $\mathcal{C}_l$ have an {\em unweighted} radius at most $\frac{3^l-1}{2}$. That is, for every tree $S\in\mathcal{C}_l$ with root $r$, and for every $v$ in $S$, the length (number of edges) in the unique path from $v$ to $r$ in $S$ is at most $\frac{3^l-1}{2}$.
\end{lemma}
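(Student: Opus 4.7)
The plan is to prove the bound by induction on $l$, establishing the recurrence $R_{l+1}\le 3R_l+1$, where $R_l$ denotes the maximum unweighted radius over trees in $\mathcal{C}_l$. The base case $l=0$ is immediate: since $F_0=(V,\emptyset)$, every tree in $\mathcal{C}_0$ is a singleton, so $R_0=0=(3^0-1)/2$. Solving the recurrence together with this base case yields $R_l\le(3^l-1)/2$, which is exactly the claimed bound.

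For the inductive step I would unpack the construction of $F_{l+1}$ from Section~\ref{sec:HierarchyOfForests}: every tree $Z\in\mathcal{C}_{l+1}$ arises from a star $\mathcal{S}$ of the partial Bor\r{u}vka forest $\mathcal{H}''_l$ with some root $T_0$, by taking the union of the trees in $\mathcal{S}$ together with one connecting edge $e(T,T_0)$ for each non-root member $T\in\mathcal{S}$. The root of $Z$ is declared to be the root $r_0$ of $T_0$, so it suffices to bound the unweighted distance in $Z$ from an arbitrary vertex $v\in Z$ to $r_0$.

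Case analysis then gives the recurrence. If $v\in T_0$, the inductive hypothesis applied to $T_0$ directly yields distance at most $R_l$, which is already within the target. Otherwise $v$ lies in some $T\neq T_0$ in the star; writing $\{x,y\}=e(T,T_0)$ with $x\in T$ and $y\in T_0$, I would route $v\to r(T)\to x\to y\to r_0$ and apply the inductive hypothesis to both $T$ and $T_0$. Each of the three tree-segments has length at most $R_l$, together with the single cross-edge $\{x,y\}$, giving a total path length of at most $3R_l+1$.

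The only subtlety worth flagging is that the attachment point $x$ is typically not the root of $T$, so the distance from $v$ to $x$ inside $T$ cannot be bounded by $R_l$ directly; routing through $r(T)$ costs the extra $R_l$ that drives the factor $3$ in the recurrence (essentially a diameter-style bound $2R_l$ on the $T$-side of the concatenated path). Beyond this minor accounting point, the argument is a clean induction with no real obstacles, and the arithmetic $3\cdot\frac{3^l-1}{2}+1=\frac{3^{l+1}-1}{2}$ closes the induction.
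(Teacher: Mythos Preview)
Your proposal is correct and essentially identical to the paper's own proof: both use induction on $l$, handle the base case $\mathcal{C}_0$ as singletons, and for the inductive step bound the path from $v$ to the root by a diameter bound $2R_{l-1}$ inside the outer tree $T'$, plus one cross-edge, plus a radius bound $R_{l-1}$ inside $T_0$, yielding $3R_{l-1}+1$. The paper phrases the $T'$-side as ``at most the unweighted diameter of $T'$'' while you route explicitly through $r(T)$, but this is the same bound.
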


\begin{proof}

We prove the lemma by induction on $l$. For $l=0$, the trees of $\mathcal{C}_0$ are single vertices, thus have a radius of $0=\frac{3^0-1}{2}$. For $l>0$, let $S\in\mathcal{C}_l$ be a tree with root $r$. The tree $S$ is the union of a tree $T_0\in\mathcal{C}_{l-1}$ with some other trees $T'\in\mathcal{C}_{l-1}$, together with an edge that connects them to $T_0$. The root $r$ is defined to be the root of $T_0$. Given $v\in S$, suppose that $v\in T'\in\mathcal{C}_{l-1}$. The unique path from $v$ to $r$ goes from $v$, to the edge connecting $T'$ and $T_0$, then from this edge to the root of $T_0$, which is $r$. The number of edges until reaching the connecting edge is at most the unweighted \textit{diameter} of $T'$, which is bounded by $2\cdot\frac{3^{l-1}-1}{2}=3^{l-1}-1$, using the induction hypothesis. The number of edges after this edge is at most the radius of $T_0$, which is at most $\frac{3^{l-1}-1}{2}$, again by the induction hypothesis. We finally get that the total number of edges in this unique path from $v$ to $r$ is at most
\[
3^{l-1}-1+1+\frac{3^{l-1}-1}{2}=\frac{3}{2}(3^{l-1}-1)+1=\frac{3^l-1}{2}~.
\]

\end{proof}

We are now ready to prove the result about ultra-compact PRDOs:
\begin{theorem}
Let $G=(V,E)$ be an undirected weighted graph with $n$ vertices, and let $k\geq1$ be an integer parameter. For every $\epsilon\in[\frac{1}{\log n},\frac{k}{\log k}]$, There is a path-reporting distance oracle for $G$ with stretch $O(k(\frac{k}{\epsilon})^{\log_{4/3}3})$, query time $O((\frac{k}{\epsilon})^{\log_{4/3}3})$ and size $n+\epsilon n^{1+\frac{1}{k}}+o(n)$.

In particular, for $k=\log n$ and $\epsilon=o(1)$ (say $\epsilon=k^{-0.0001}$, or even $\epsilon=\frac{1}{\log^*n}$), we get a PRDO with size $n+o(n)$, whose stretch and query time are at most $O((\frac{k}{\epsilon})^{\log_{4/3}4})=O(k^{4.82})=O(\log^{4.82}n)$.
\end{theorem}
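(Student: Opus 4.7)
The plan is to adapt the ultra-compact trick from Theorem~\ref{thm:UltraCompactPRDO} to the Bor\r{u}vka-forest construction of Section~\ref{sec:NewPRDOSec}. Concretely, I would build the hierarchy $\{F_i\}_{i=0}^l$ from Section~\ref{sec:HierarchyOfForests} but now take $l = \lceil \log_{4/3}(k/\epsilon)\rceil + O(1)$, chosen so that Lemma~\ref{lemma:NumberOfTrees} gives $|\mathcal{C}_l| \le \frac{\epsilon n}{ck}$ for the absolute constant $c$ from the size bound in Theorem~\ref{thm:TZ}. Then applying $TZ$ on $\mathcal{H}_l$ produces an underlying spanner and oracle of combined size at most $ck\cdot(\epsilon n /(ck))^{1+1/k} \le \epsilon\, n^{1+1/k}$, exactly as in the unweighted ultra-compact analysis.

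The stored information is essentially that of Theorem~\ref{thm:NewPRDO} minus the expensive map $S(x)$: the oracle $TZ$, the set $\{e(T,T') : \{T,T'\}\in S_{TZ}\}$, and the pointers $\{p_l(x),h_l(x)\}_{x\in V}$. To answer a query $(u,v)$ without having $S(u),S(v)$ at hand, I would walk from $u$ via the $p_l$ pointers to the root of its tree in $F_l$, and analogously for $v$; by construction (the root of a tree in $\mathcal{C}_l$ is inherited from the root of its $T_0$-component) each root is a canonical representative of its cluster that $TZ$ understands. By Lemma~\ref{lemma:UnweightedRadius}, this pointer chase has unweighted length at most $(3^l-1)/2 = O((k/\epsilon)^{\log_{4/3}3})$. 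After identifying the cluster roots, the remainder of the query proceeds exactly as in Theorem~\ref{thm:NewPRDO}: call $TZ$ to obtain a path $Q$ in $\mathcal{H}_l$, then invoke Lemma~\ref{lemma:ExtractingAlgorithm} to expand it to a path in $G$. The stretch bound is the same $2k\cdot 3^{l+1} = O(k\cdot(k/\epsilon)^{\log_{4/3}3})$, while the query time becomes $O(\log k + (k/\epsilon)^{\log_{4/3}3}) = O((k/\epsilon)^{\log_{4/3}3})$ once the pointer chasing dominates.

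The crux is the fine-grained size accounting, done in \emph{words of $\log n$ bits}. The $TZ$ oracle and edge-representative set contribute $\epsilon n^{1+1/k}$ words; the pointers $\{p_l(x)\}$ contribute one word per vertex for a total of $n$ words. The key observation is that each $h_l(x)$ is bounded by $(3^l-1)/2$ by Lemma~\ref{lemma:UnweightedRadius}, so it can be stored using only $l+O(1) = O(\log(k/\epsilon))$ bits. For $\epsilon \ge 1/\log n$ and $k \le \log n$ this is $O(\log\log n)$ bits per vertex, i.e., $O(n\log\log n/\log n) = o(n)$ words in total. Summing yields $n + \epsilon n^{1+1/k} + o(n)$ words. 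The main obstacle I anticipate is the bookkeeping needed to argue that the cluster root of a vertex $x$, reached by pointer chasing through $p_l$, is well-defined and is exactly the vertex of $\mathcal{H}_l$ on which $TZ$ operates; this follows from the construction in Section~\ref{sec:HierarchyOfForests} but must be stated explicitly so that the query algorithm is meaningful without storing $S(x)$.
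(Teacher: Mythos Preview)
Your proposal is correct and matches the paper's proof essentially step for step: the same choice $l=\lceil\log_{4/3}(k/\epsilon)\rceil+O(1)$, the same removal of $\{S(x)\}$ with pointer-chasing via $p_l$ to recover the cluster root, the same invocation of Lemma~\ref{lemma:UnweightedRadius} to bound both the extra query time and the bit-length of $h_l(x)$, and the same word-count breakdown $n+\epsilon n^{1+1/k}+o(n)$. One tiny slip: storing $h_l(x)\le(3^l-1)/2$ takes $l\log_2 3$ bits rather than $l+O(1)$ bits, but this is still $O(\log(k/\epsilon))=O(\log\log n)$ and your conclusion is unaffected.
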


\begin{proof}

Let $c$ be the constant from the O-notation in the size of the TZ PRDO from Theorem \ref{thm:TZ}. That is, the size of this PRDO is at most $ckn^{1+\frac{1}{k}}$. We choose $l=\lceil\log_{4/3}\frac{k}{\epsilon}+\log_{4/3}(2c)\rceil$, and compute the graph $\mathcal{H}_l=(\mathcal{C}_l,\mathcal{E}_l)$.

Our ultra-compact PRDO will store the exact same information as the PRDO in the proof of Theorem \ref{thm:NewPRDO}, except for the variables $\{S(x)\}_{x\in V}$, where $S(x)$ is the tree of $\mathcal{C}_l$ that contains $x$. Accordingly, we change the query algorithm as follows. Given $(u,v)\in V^2$, we first find $S(u),S(v)$ by following the pointers $p()$ from $u$ and from $v$. We assume that the roots of these trees represent the trees themselves. The rest of the algorithm is identical to the one in Theorem \ref{thm:NewPRDO}.

The additional query time, that was added due to the first step of the query algorithm, is proportional to the unweighted radius of the trees $S(u),S(v)$. I.e., it is proportional to the length of the unique path from $u,v$ to the roots of their trees. In Lemma \ref{lemma:UnweightedRadius} we proved that this length is at most
\[
\frac{3^l-1}{2}=O(3^l)=O(3^{\log_{4/3}\frac{k}{\epsilon}})=O((\frac{k}{\epsilon})^{\log_{4/3}3})~.
\]
Since $\epsilon\leq\frac{k}{\log k}$, this term is more significant than the $O(\log k)$ term from the original analysis, thus the total query time is indeed $O((\frac{k}{\epsilon})^{\log_{4/3}3})$.

By an identical proof to the one of Theorem \ref{thm:NewPRDO}, we can show that the stretch of our PRDO is
\[
2k\cdot3^{l+1}=O(k\cdot3^l)=O(k(\frac{k}{\epsilon})^{\log_{4/3}3})~.
\]

Regarding the size of our PRDO, we start by bounding the size of both the PRDO $TZ$ and its underlying spanner $S_{TZ}$ by
\[
c\cdot k|\mathcal{C}_l|^{1+\frac{1}{k}}\leq ck\left((\frac{3}{4})^ln\right)^{1+\frac{1}{k}}\leq ck\left(\frac{n}{2ck/\epsilon}\right)^{1+\frac{1}{k}}\leq\frac{\epsilon n^{1+\frac{1}{k}}}{2}~.
\]
We used here Lemma \ref{lemma:NumberOfTrees} to bound $|\mathcal{C}_l|$. Therefore, the total size of these two components is at most $\epsilon n^{1+\frac{1}{k}}$.

As for the variables $h_l()$, notice that they are all numbers that are bounded by the unweighted radius of the trees of $\mathcal{C}_l$, which is at most $\frac{3^l-1}{2}$ by Lemma \ref{lemma:UnweightedRadius}. Thus, the total number of \textit{words} each $h_l(x)$ requires is
\[
\frac{1}{\log n}\cdot\log\frac{3^l-1}{2}<\frac{\log(3^l)}{\log n}=\frac{(\log_{4/3}\frac{k}{\epsilon}+O(1))\log3}{\log n}\leq\frac{O(\log\log n)}{\log n}=o(1)~,
\]
where we used the fact that $\epsilon\geq\frac{1}{\log n}$.

We conclude that the number of words that are required for our PRDO (which includes also the single word $p(x)$ for each vertex $x\in V$) is
\[
\epsilon n^{1+\frac{1}{k}}+\sum_{x\in V}1+\sum_{x\in V}o(1)=n+\epsilon n^{1+\frac{1}{k}}+o(n)~.
\]

\end{proof}

\section{Stretch Analysis for Weighted Graphs}\label{sec:str-w}

We now state and prove several lemmata that will be useful for bounding the stretch of our PRDO.

\begin{lemma} \label{lemma:ExternalEdges}
Let $\{a,b\},\{c,d\}\in E$ be edges in $G$ such that $a$ and $c$ are in the same tree $T$ of the forest $F_i$, while $b,d$ are not in $T$. Let $v$ be the root of $T$ and let $P_{a,v}$ and $P_{v,c}$ be the unique paths in $T$ from $a$ to $v$ and from $v$ to $c$ respectively. Then, the path $P=P_{a,v}\circ P_{v,c}$ in $G$ satisfies
\[
w(P)\leq\frac{3^i-1}{2}(w(a,b)+w(c,d))~.
\]
\end{lemma}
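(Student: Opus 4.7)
The plan is to proceed by induction on $i$. The base case $i = 0$ is immediate: $F_0 = (V, \emptyset)$ forces each tree to be a single vertex, so $a = c = v$, both $P_{a,v}$ and $P_{v,c}$ are empty, and the right-hand side vanishes since $3^0 - 1 = 0$.

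For the inductive step, I would unfold the construction of $F_i$ from Section~\ref{sec:HierarchyOfForests}: $T$ corresponds to some star $\mathcal{S}$ in the partial Bor\r{u}vka forest of $\mathcal{H}_{i-1}$, whose root-tree $T_0 \in F_{i-1}$ has root $v$, and whose other trees $T'$ are attached to $T_0$ by edges $e(T', T_0) = \{y_{T'}, z_{T'}\}$ with $y_{T'} \in T'$ and $z_{T'} \in T_0$. The central tool is the Bor\r{u}vka property (Lemma~\ref{lemma:BoruvkaForest} applied one level up): each $\{T', T_0\}$ is the minimum-weight edge incident to $T'$ in $\mathcal{H}_{i-1}$, so for every edge $\{x, y\} \in E$ with $x \in T'$ and $y$ in a tree of $\mathcal{C}_{i-1}$ different from $T'$ we obtain $w(e(T', T_0)) \le w(x, y)$. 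Let $T_a, T_c \in \mathcal{C}_{i-1}$ denote the level-$(i-1)$ trees containing $a$ and $c$, both necessarily in $\mathcal{S}$, while $b$ and $d$ lie in trees outside $\mathcal{S}$. I would split into cases according to whether each of $T_a, T_c$ equals $T_0$, decompose $P_{a,v} \circ P_{v,c}$ into sub-paths living inside $T_a$, $T_0$, and $T_c$ separated by (possibly doubly traversed) connecting edges, and bound each sub-path by an application of the induction hypothesis on the appropriate level-$(i-1)$ tree, using as external edges either $\{a,b\}, \{c,d\}$ directly or the connecting edges $\{y_{T_a}, z_{T_a}\}, \{y_{T_c}, z_{T_c}\}$ via the Bor\r{u}vka bound.

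In the main case $T_a \neq T_c$ with both distinct from $T_0$, three applications of the induction hypothesis each yield contributions of $\frac{3^{i-1}-1}{2}(w(a,b)+w(c,d))$ once the Bor\r{u}vka bounds $w(y_{T_a}, z_{T_a}) \le w(a,b)$ and $w(y_{T_c}, z_{T_c}) \le w(c,d)$ are inserted; two connecting-edge traversals add at most $w(a,b) + w(c,d)$; and the total telescopes via $\frac{3^{i-1}-1}{2} \cdot 3 + 1 = \frac{3^i - 1}{2}$ to the desired bound. The boundary cases $T_a = T_0$ or $T_c = T_0$ only drop some of these contributions and go through by the same arithmetic.

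The main obstacle is the bookkeeping in the case $T_a = T_c \neq T_0$: here $P_{a,v}$ and $P_{v,c}$ retrace the same connecting edge $\{y_{T_a}, z_{T_a}\}$ and the same portion of $T_0$, so one must also invoke the induction hypothesis ``degenerately'' with equal endpoints (both on $T_a$, to bound the round-trip from $v_a$ to $y_{T_a}$, and on $T_0$, to bound the round-trip from $z_{T_a}$ to $v$) and verify that the accumulated constants still coalesce exactly into $\frac{3^i - 1}{2}$ rather than a larger expression. The key identity that makes this work is that $\frac{3^{i-1}-1}{2} + 3^{i-1} = \frac{3^i - 1}{2}$, so each level's overhead contributes precisely one extra factor of $3$ to the stretch.
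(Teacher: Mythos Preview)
Your proposal is correct and follows essentially the same inductive strategy as the paper: unwind the star structure of $T$ over $T_0$, apply the induction hypothesis separately to the sub-paths inside $T_a$, $T_0$, and $T_c$, and use the Bor\r{u}vka minimality (this is Lemma~\ref{lemma:StarsForest} rather than Lemma~\ref{lemma:BoruvkaForest}) to bound each connecting edge by $w(a,b)$ or $w(c,d)$. The only stylistic difference is that the paper avoids your explicit case split (including the $T_a=T_c$ case) by introducing the first-in-$T_0$ vertices $a_0,c_0$ and their predecessors $a',c'$, which handles all cases uniformly; your arithmetic and the paper's coincide, with the identity $\tfrac{3^{i-1}-1}{2}\cdot 3 + 1 = \tfrac{3^i-1}{2}$ doing the work either way.
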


\begin{proof}

The proof is by induction over $i\in[0,l]$. For $i=0$, since the trees of $F_0$ are of size $1$, it must be that $a=c=v$. Thus $P$ is an empty path, and the claim holds.

For $i>0$, the tree $T$ was created by a star $\mathcal{S}$ in $\mathcal{H}''_{i-1}$. Note that while $T$ and $\mathcal{S}$ conceptually represent the same set of vertices in the original graph $G$, the tree $T$ is actually a sub-graph of $G$, while $\mathcal{S}$ is a sub-graph of $\mathcal{H}''_{i-1}$, whose vertices are trees in $G$. The star $\mathcal{S}$ consists of a tree $T_0$ of $F_{i-1}$, and some other trees $T'$ of $F_{i-1}$ such that $\{T',T_0\}\in\mathcal{E}''_{i-1}$.

Consider the paths $P_{a,v},P_{v,c}$. The root $v$ of $T$ was chosen to be the root of $T_0$, hence both paths must pass through $T_0$. Let $a_0$ be the first vertex of $P_{a,v}$, in the direction from $a$ to $v$, that belongs to $T_0$. Similarly, let $c_0$ be the last vertex of $P_{v,c}$, in the direction from $v$ to $c$, that belongs to $T_0$.

We first assume that $a_0\neq a$. Let $a'$ be the preceding vertex to $a_0$ on $P_{a,v}$. The sub-path $P_{a,a'}$, from $a$ to $a'$, must be contained in a tree $T'_a$ of $F_{i-1}$, where $e(T'_a,T_0)=\{a',a_0\}$. Let $v'$ be the root of $T'_a$ (see Figure \ref{fig:ExternalEdges} for an illustration). Then the path $P_{a,a'}$ is contained in the concatenation of two paths $P_{a,v'}$ and $P_{v',a'}$ in $T'_a$, from $a$ to $v'$ and from $v'$ to $a'$ respectively. By the induction hypothesis,
\[
w(P_{a,a'})\leq w(P_{a,v'}\circ P_{v',a'})\leq\frac{3^{i-1}-1}{2}(w(a,b)+w(a',a_0))~.
\]

\begin{center}
\begin{figure}[ht!]
    \centering
    \includegraphics[width=9.5cm, height=5cm]{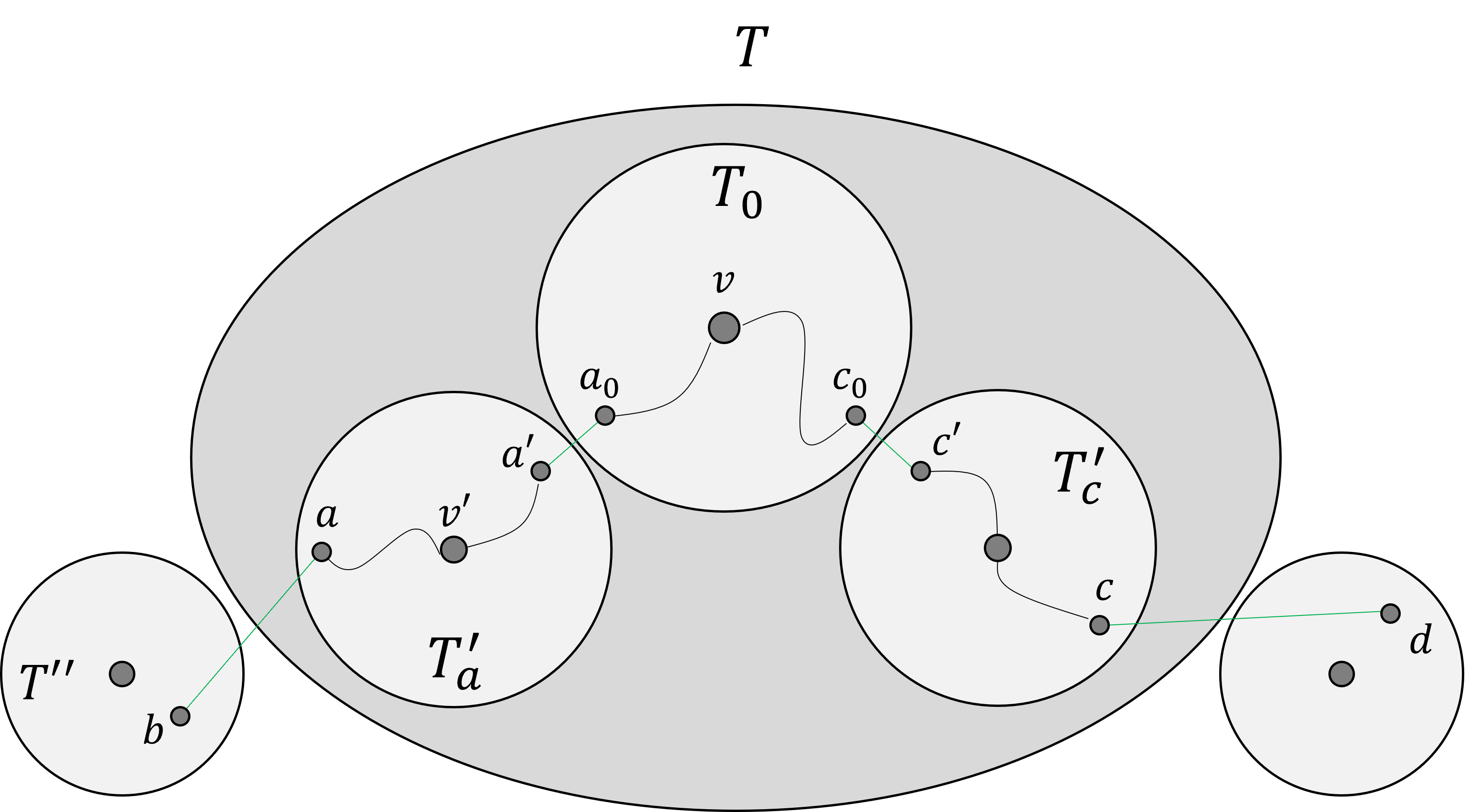}
    \caption{An illustration of Lemma \ref{lemma:ExternalEdges}. Straight green lines represent edges of the graph $G$, while black curvy lines represent paths.}
    \label{fig:ExternalEdges}
\end{figure}
\end{center}

Recall that by Lemma \ref{lemma:StarsForest}, we have $\{T'_a,T_0\}=e_{T'_a}$. That is, the weight of $\{T'_a,T_0\}$ is smaller or equal to the weight of every adjacent edge to $T'_a$ in $\mathcal{H}_{i-1}$. In particular, if $T''$ is the vertex of $\mathcal{H}_{i-1}$ that contains $b$ (it cannot be that $T''=T'_a$, as $b$ is not in $T$, hence not in $T'_a$), then
\[
w(a',a_0)=w(e(T'_a,T_0))=w(T'_a,T_0)=w(e_{T'_a})\leq w(T'_a,T'')=w(e(T'_a,T''))\leq w(a,b)~,
\]
where the last step is due to the definition of $e(\cdot,\cdot)$. We finally conclude that
\[
w(P_{a,a'})\leq\frac{3^{i-1}-1}{2}(w(a,b)+w(a',a_0))\leq(3^{i-1}-1)w(a,b)~.
\]

Symmetrically, we conclude that if $c_0\neq c$, $c'$ is the next vertex after $c_0$ on $P_{v,c}$, and $P_{c',c}$ is the sub-path of $P_{v,c}$ from $c'$ to $c$, then
\[
w(c_0,c')\leq w(c,d),\;\;\;\;\;\text{and}\;\;\;\;\;w(P_{c',c})\leq(3^{i-1}-1)w(c,d)~.
\]

Let $P_{a_0,v},P_{v,c_0}$ be the sub-paths of $P_{a,v},P_{v,c}$ from $a_0$ to $v$ and from $v$ to $c_0$ respectively. Using the induction hypothesis, we get
\[
w(P_{a_0,v}\circ P_{v,c_0})\leq\frac{3^{i-1}-1}{2}(w(a_0,a')+w(c_0,c'))\leq\frac{3^{i-1}-1}{2}(w(a,b)+w(c,d))~.
\]


The other cases where $a_0=a$ and $c_0\neq c$, or symmetrically $a_0\neq a$ and $c_0=c$, and the case where $a_0=a,c_0=c$, can be all analyzed in the same way. In every possible case, we get to the conclusion that
\[
w(P_{a_0,v}\circ P_{v,c_0})\leq\frac{3^{i-1}-1}{2}(w(a,b)+w(c,d))~.
\]
In addition, whenever the vertices $a'$ and $c'$ are defined,
\[
w(P_{a,a'})+w(a',a_0)\leq(3^{i-1}-1)w(a,b)+w(a,b)=3^{i-1}w(a,b)~,
\]
and
\[
w(c_0,c')+w(P_{c',c})\leq w(c,d)+(3^{i-1}-1)w(c,d)=3^{i-1}w(c,d)~.
\]
The path $P=P_{a,v}\circ P_{v,c}$ is contained in the concatenation of the above paths and edges; the path $P_{a_0,v}\circ P_{v,c_0}$, and the paths $P_{a,a'}$, $\{a',a_0\}$, $P_{c,c'}$, $\{c',c_0\}$ whenever they are defined.
Therefore we finally get
\[
w(P)\leq\frac{3^{i-1}-1}{2}(w(a,b)+w(c,d))+3^{i-1}(w(a,b)+w(c,d))=\frac{3^i-1}{2}(w(a,b)+w(c,d))~.
\]

\end{proof}

The proofs of the next two lemmata follow the same lines.

\begin{lemma} \label{lemma:InternalEdges}
Let $\{a,b\}\in E$ be an edge of $G$, such that $a,b$ are in the same tree $T$ of $F_i$, for some $i\in[0,l]$. Let $z=lca(a,b)$ be the lowest common ancestor in $T$ (see Section \ref{sec:Preliminaries} for definition) and let $P_{a,z},P_{b,z}$ be the unique paths in $T$ from $a$ to $z$ and from $b$ to $z$ respectively. Then,
\[
w(P_{a,z})+w(P_{b,z})\leq(3^i-1)w(a,b)~.
\]
\end{lemma}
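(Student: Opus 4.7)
The plan is to prove this lemma by induction on $i$, in complete parallel with the proof of \lemmaref{lemma:ExternalEdges}, and in fact using that lemma (at level $i-1$) as the workhorse in the inductive step. Observe at the outset that since $z$ lies on the path from $a$ to $b$ in $T$, we have $w(P_{a,z}) + w(P_{b,z}) = w(P_{a,b})$, so it suffices to bound the weight of the unique $a$-$b$ path in $T$.

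The base case $i=0$ is trivial: trees of $F_0$ are singletons, so $a=b=z$, both sub-paths are empty, and the bound $0 \leq 0$ holds. For the inductive step with $i \geq 1$, recall that $T$ is obtained from a star $\mathcal{S}$ in $\mathcal{H}''_{i-1}$, consisting of a ``center'' tree $T_0 \in F_{i-1}$ together with ``spoke'' trees $T' \in F_{i-1}$, each attached to $T_0$ via the single edge $e(T',T_0)$. The first case is when $a$ and $b$ lie in the same subtree $T^* \in F_{i-1}$: since $T^*$ is a connected subtree of $T$, the unique $a$-$b$ path in $T$ coincides with the unique $a$-$b$ path in $T^*$, and the induction hypothesis at level $i-1$ immediately gives $w(P_{a,b}) \leq (3^{i-1}-1)w(a,b) \leq (3^i-1)w(a,b)$.

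The interesting case is when $a, b$ lie in two different subtrees of $F_{i-1}$. I will handle the generic sub-case where $a \in T_a$ and $b \in T_b$ with both spokes $T_a, T_b \neq T_0$ (the case where one subtree equals $T_0$ is strictly easier and follows the same template). Write $e(T_a, T_0) = \{a_a, a_0\}$ and $e(T_b, T_0) = \{b_b, b_0\}$ with $a_a \in T_a$, $b_b \in T_b$, $a_0, b_0 \in T_0$. The unique $a$-$b$ path in $T$ decomposes as
\[
P_{a,a_a} \;\circ\; \{a_a, a_0\} \;\circ\; P_{a_0, b_0} \;\circ\; \{b_0, b_b\} \;\circ\; P_{b_b, b},
\]
where the three tree pieces live inside $T_a$, $T_0$, $T_b$ respectively. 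I will apply \lemmaref{lemma:ExternalEdges} at level $i-1$ three times — once inside each of $T_a$, $T_b$, and $T_0$ — to bound each piece. For $T_a$ the two ``external'' edges are $\{a,b\}$ and $\{a_a, a_0\}$ (each with exactly one endpoint in $T_a$); for $T_b$ they are $\{a,b\}$ and $\{b_b, b_0\}$; for $T_0$ they are $\{a_a,a_0\}$ and $\{b_b,b_0\}$. Each application upper-bounds the tree piece in question by a path that passes through the root of the relevant subtree, and hence by $\frac{3^{i-1}-1}{2}$ times the sum of the two edge weights.

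The technical heart of the argument is the inequality $w(a_a, a_0), w(b_b, b_0) \leq w(a,b)$. This follows from \lemmaref{lemma:StarsForest}: since $\{T_a, T_0\} = e_{T_a}$ is the lightest edge adjacent to $T_a$ in $\mathcal{H}_{i-1}$, and $\{a,b\} \in E$ induces an edge $\{T_a, T_b\} \in \mathcal{E}_{i-1}$ whose weight $w(e(T_a,T_b))$ is at most $w(a,b)$ by minimality in the definition of $e(\cdot,\cdot)$, we get $w(a_a, a_0) = w(e_{T_a}) \leq w(e(T_a,T_b)) \leq w(a,b)$, and symmetrically for the other connecting edge. Substituting this into the three \lemmaref{lemma:ExternalEdges} bounds yields $w(P_{a,a_a}), w(P_{a_0, b_0}), w(P_{b_b, b}) \leq (3^{i-1}-1)w(a,b)$, and summing with the two connecting edges gives
\[
w(P_{a,b}) \leq 3(3^{i-1}-1)w(a,b) + 2 w(a,b) = (3^i - 1)\, w(a,b),
\]
exactly the desired bound. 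The main obstacle I expect is precisely this last step: the recursion $3(x-1) + 2$ must land on $3^i - 1$, which forces tight use of the star structure — one cannot afford any slack in either the connecting-edge inequality (which is why Lemma \ref{lemma:StarsForest}'s identification of the connecting edge as $e_{T_a}$ is essential) or in the triple application of \lemmaref{lemma:ExternalEdges}.
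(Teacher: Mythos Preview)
Your proof is correct and follows essentially the same approach as the paper: decompose the $a$--$b$ path through the star structure of $T$ into at most three pieces inside trees of $F_{i-1}$, apply \lemmaref{lemma:ExternalEdges} at level $i-1$ to each piece, and use \lemmaref{lemma:StarsForest} to bound the connecting-edge weights by $w(a,b)$. The only cosmetic difference is that the paper first reduces to the \emph{minimal} $i$ for which $a,b$ share a tree (so the same-subtree case never arises), whereas you handle that case directly via the induction hypothesis; the arithmetic and the key ideas are identical.
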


\begin{proof}

First note that it is enough to prove this lemma for the \textit{minimal} index $i\in[0,l]$ such that $a,b$ are in the same tree of $F_i$. This is true since if $i$ is this minimal index, then $a,b$ belong to the same tree $T'$ of $F_j$ if and only if $i\leq j$ (the forests $F_j$ are a non-decreasing series of sets), and $P_{a,z},P_{b,z}$ together have the same edges in $T'$ as they have in $T$. Therefore
\[
w(P_{a,z})+w(P_{b,z})\leq(3^i-1)w(a,b)\leq(3^j-1)w(a,b)~.
\]

Indeed, let $i$ be this minimal index, and let $T$ be the tree of $F_i$ that contains both $a$ and $b$. If $i=0$, then $T$ contains a single vertex, and thus $a=b$. Hence, the paths $P_{a,z},P_{b,z}$ are empty and the claim is trivial.

Suppose that $i>0$. By the construction of $F_i$, we know that there is a star $\mathcal{S}$ in $\mathcal{H}''_{i-1}$, such that $T$ is the union of the trees in $\mathcal{S}$. The star $\mathcal{S}$ consists of a tree $T_0$ of $F_{i-1}$ and some other trees $T'$ of $F_{i-1}$ such that $\{T',T_0\}=e_{T'}$ is an edge of $\mathcal{H}_{i-1}$.

We consider the path $P$ to be the concatenation of the paths $P_{a,z}$ and $P_{z,b}$ ($P_{z,b}$ is the same path as $P_{b,z}$). Note that $P$ cannot be contained in a single tree $T_0$ or $T'$, since it would contradict the minimality of $i$. Therefore, $P$ must pass through $T_0$, as the only edges in $T$ connecting two different trees of $\mathcal{S}$ are from some $T'$ to $T_0$. From the same reason, if $P_{a_0,b_0}$ is the sub-path of $P$ that is in $T_0$, then each of the sub-paths $P_{a,a_0}$ (from $a$ to $a_0$) and $P_{b,b_0}$ (from $b$ to $b_0$) must be contained in two different trees $T',T''$ of $\mathcal{S}$, except for their last edge. Note that at least one of these two sub-paths is not empty, as otherwise $P$ would be contained in $T_0$.

If $P_{a,a_0}$ is not empty, we can write $P_{a,a_0}=P_{a,a'}\circ\{a',a_0\}$ where $P_{a,a'}$ is in $T'$, $a_0$ is in $T_0\neq T'$, and $\{a',a_0\}=e(T',T_0)$. Also, $\{T',T_0\}=e_{T'}$ in the graph $\mathcal{H}_{i-1}$. The vertex $b$ cannot be in $T'$, otherwise both $a,b$ would be contained in $T'$, again in contradiction to the minimality of $i$. Thus, $b$ is in some tree $T''\neq T'$ of $\mathcal{S}$. By the definitions of $e_{T'}$ and $e(\cdot,\cdot)$,
\[
w(a',a_0)=w(e(T',T_0))=w(T',T_0)=w(e_{T'})\leq w(T',T'')=w(e(T',T''))\leq w(a,b)~.
\]
The edges of the path $P_{a,a'}$ are contained in the union of the two paths in $T'$, from $a$ to the root of $T'$, and from $a'$ to the root of $T'$. Using Lemma \ref{lemma:ExternalEdges}, we get
\begin{eqnarray*}
w(P_{a,a_0})&=&w(P_{a,a'})+w(a',a_0)\\
&\leq&\frac{3^{i-1}-1}{2}(w(a,b)+w(a',a_0))+w(a',a_0)\\
&\leq&\frac{3^{i-1}-1}{2}(w(a,b)+w(a,b))+w(a,b)\\
&=&3^{i-1}w(a,b)~.
\end{eqnarray*}
Note that the inequality $w(P_{a,a_0})\leq3^{i-1}w(a,b)$ is true even if $P_{a,a_0}$ is empty. With an analogous proof, we can prove that $w(P_{b_0,b})\leq3^{i-1}w(a,b)$ as well. In addition, we can prove that if $P_{b_0,b}$ is not empty, and $b'$ is the next vertex after $b_0$ in $P$, then $w(b_0,b')\leq w(a,b)$. The proof is again analogous.

Generalizing the previous notations, let $a'$ be the preceding vertex to $a_0$ in $P$, where if $a_0=a$ (i.e., $P_{a,a_0}$ is empty), set $a'=b$. Let $b'$ be the next vertex after $b_0$ in $P$, where if $b_0=b$, set $b'=a$. Note that the edge $\{a_0,a'\}$ is always an edge of $G$ such that $a'$ is not in $T_0$; if $a_0\neq a$ then it is implied by the definition of $a_0$, and otherwise $a_0=a,a'=b$, so since $a$ is in $T_0$, $b$ cannot be in $T_0$. Symmetrically, $\{b_0,b'\}$ is an edge of $G$ such that $b'$ is not in $T_0$. Thus, by Lemma \ref{lemma:ExternalEdges},
\[
w(P_{a_0,b_0})\leq\frac{3^{i-1}-1}{2}(w(a_0,a')+w(b_0,b'))\leq(3^{i-1}-1)w(a,b)~.
\]

We finally conclude that
\begin{eqnarray*}
w(P)&=&w(P_{a,a_0})+w(P_{a_0,b_0})+w(P_{b_0,b})\\
&\leq&3^{i-1}w(a,b)+(3^{i-1}-1)w(a,b)+3^{i-1}w(a,b)\\
&=&(3^i-1)w(a,b)
\end{eqnarray*}

\end{proof}

The following lemma generalizes Lemma \ref{lemma:InternalEdges} to cases where there is no edge between the two vertices $a,b$.

\begin{lemma} \label{lemma:InternalPaths}
Let $a,b\in V$ be vertices of $G$ in the same tree $T$ of $F_i$, for some $i\in[0,l]$. Let $z=lca(a,b)$ be the lowest common ancestor in $T$, and let $P_{a,z},P_{b,z}$ be the unique paths in $T$ from $a$ to $z$ and from $b$ to $z$ respectively. Then,
\[
w(P_{a,z})+w(P_{b,z})\leq(3^i-1)d_G(a,b)~.
\]
\end{lemma}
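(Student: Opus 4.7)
The plan is to reduce the general case to the edge case proved in Lemma \ref{lemma:InternalEdges}, by walking along a shortest $a$-to-$b$ path in $G$ and charging the $T$-path from $a$ to $b$ piece by piece.

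Concretely, I would fix a shortest path $\pi = (a = x_0, x_1, \ldots, x_m = b)$ in $G$, and consider the subsequence $a = y_0, y_1, \ldots, y_q = b$ of those vertices of $\pi$ that lie in the tree $T$ of $F_i$ containing $a, b$. For each consecutive pair $y_j, y_{j+1}$, let $W_j$ denote the weight of the unique path from $y_j$ to $y_{j+1}$ in $T$. Since the walk in $T$ obtained by concatenating these $T$-paths goes from $a$ to $b$, it must contain the simple $T$-path $P_{a,z} \circ P_{z,b}$ (every edge of the simple path appears an odd number of times in any walk between the same endpoints in a tree). Therefore
\[
w(P_{a,z}) + w(P_{b,z}) \;\leq\; \sum_{j=0}^{q-1} W_j.
\]

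Next I would bound each $W_j$. There are two cases depending on whether $y_j, y_{j+1}$ are consecutive on $\pi$. If they are, then $\{y_j, y_{j+1}\}$ is an edge of $G$ with both endpoints in $T$, and Lemma \ref{lemma:InternalEdges} gives $W_j \leq (3^i - 1) w(y_j, y_{j+1})$. Otherwise, the $\pi$-subpath $\sigma_j$ from $y_j$ to $y_{j+1}$ visits vertices $u_1, \ldots, u_p$ which all lie outside $T$. The two boundary edges $\{y_j, u_1\}$ and $\{u_p, y_{j+1}\}$ then satisfy the hypothesis of Lemma \ref{lemma:ExternalEdges}, whose conclusion bounds the weight of $P_{y_j, v} \circ P_{v, y_{j+1}}$ (where $v$ is the root of $T$) by $\frac{3^i - 1}{2}(w(y_j,u_1) + w(u_p, y_{j+1}))$. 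Since the $T$-path from $y_j$ to $y_{j+1}$ is contained in $P_{y_j,v} \circ P_{v, y_{j+1}}$, and $w(y_j,u_1) + w(u_p, y_{j+1}) \leq w(\sigma_j)$, we obtain $W_j \leq \frac{3^i - 1}{2} w(\sigma_j) \leq (3^i - 1)\, w(\sigma_j)$.

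The edges $\{y_j, y_{j+1}\}$ of the first type together with the subpaths $\sigma_j$ of the second type partition the shortest path $\pi$, so summing yields
\[
\sum_{j=0}^{q-1} W_j \;\leq\; (3^i - 1)\, w(\pi) \;=\; (3^i - 1)\, d_G(a, b),
\]
which combined with the earlier inequality proves the claim. The only subtle point, and the step I would take most care with, is justifying that the walk in $T$ obtained by concatenating the $T$-paths dominates the simple $T$-path $P_{a,z} \circ P_{z,b}$ in total weight — but this is immediate from the tree structure, since each simple-path edge is traversed an odd (hence positive) number of times in any walk between its endpoints. The rest is purely a bookkeeping application of the two preceding lemmas.
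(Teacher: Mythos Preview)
Your argument is correct, and it takes a genuinely different route from the paper's proof.

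The paper reduces Lemma~\ref{lemma:InternalPaths} to Lemma~\ref{lemma:InternalEdges} by a \emph{virtual edge} trick: it defines $G_+$ to be $G$ with an added edge $\{a,b\}$ of weight $d_G(a,b)$, and then argues that the entire hierarchy $F_0,\ldots,F_i$ built on $G_+$ is identical to the one built on $G$. The reason is that on a shortest $a$--$b$ path every individual edge has strictly smaller weight than $d_G(a,b)$, so the new edge can never be the minimum-weight outgoing edge $e_T$ of any cluster in the Bor\r{u}vka process; hence the partial Bor\r{u}vka forests, and therefore the trees $T$, are unchanged. Lemma~\ref{lemma:InternalEdges} then applies verbatim in $G_+$.

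Your approach instead stays in $G$ and decomposes a shortest $a$--$b$ path into maximal segments according to membership in $T$, applying Lemma~\ref{lemma:InternalEdges} to the single-edge segments inside $T$ and Lemma~\ref{lemma:ExternalEdges} to the segments whose interior lies outside $T$; the tree-walk inequality then collapses the concatenation of $T$-paths down to $P_{a,z}\circ P_{z,b}$. This is arguably more elementary: it treats both preceding lemmas as black boxes and never re-opens the Bor\r{u}vka construction. The paper's proof, on the other hand, is shorter once you accept the invariance of the forests, and it yields Lemma~\ref{lemma:InternalPaths} as essentially a one-line corollary of Lemma~\ref{lemma:InternalEdges}. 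Both approaches give the same bound $(3^i-1)\,d_G(a,b)$; your Case~2 in fact gives the stronger constant $\tfrac{3^i-1}{2}$ on the external segments, which you then relax for uniformity.
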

\begin{proof}

As in the proof of Lemma \ref{lemma:InternalEdges}, it is enough to prove the lemma only for the \textit{minimal} index $i\in[0,l]$ such that $a,b$ are in the same tree of $F_i$. Thus, from now on we assume that $i$ is this minimal index.

We define a new graph $G_+$ which is identical to $G$, but with an additional edge $\{a,b\}$ between $a,b$ with weight $d_G(a,b)$. In case there was already an edge $\{a,b\}\in E$, we only change its weight to $d_G(a,b)$ if necessary.

We will show that performing the same construction of the forests $F_0,F_1,...,F_i$ on this new graph $G_+$, keeps the forests the same as before, and so the paths $P_{a,z},P_{b,z}$ in $T$ are the same paths. Then, we can use Lemma \ref{lemma:InternalEdges} to conclude our wanted result.

From now on, we assume that $G$ does not have an edge between $a,b$ with weight $d_G(a,b)$, as otherwise $G_+=G$ and the lemma follows directly from Lemma \ref{lemma:InternalEdges}. Thus, there is a shortest path $P_{a,b}$ in $G$ with at least $2$ edges between $a$ and $b$ with a total weight of $d_G(a,b)$. Since the weights of the edges of $G$ are positive, the weight of every edge in $P_{a,b}$ is {\em strictly less} than $d_G(a,b)$.

The proof that the forest $F_j$, when built on the graph $G_+$, is identical to $F_j$, when built on $G$, is by induction over $j\in[0,i]$. For $j=0$, the claim is trivial.

Let $j<i$, and assume that $F_j$ is identical in both cases. We prove that so is $F_{j+1}$. Let $T_a,T_b$ be the trees of $F_j$ that contain $a,b$ respectively. By the minimality of $i$, we know that $T_a\neq T_b$. The vertices of the graph $\mathcal{H}_j$ are still the trees of $F_j$, the same as before. The edges of $\mathcal{H}_j$ are all the edges of the form $\{T,T'\}$, where there is an edge in $E$ between a vertex of $T$ and a vertex of $T'$. The weight of $\{T,T'\}$ is the minimal weight of an edge in $E$ between $T,T'$. Hence, all the edges that were before in $\mathcal{H}_j$ are still there, with the same weights, since we did not delete edges from $E$. The only exception can occur in the edge $\{T_a,T_b\}$, where this edge could be either added to $\mathcal{H}_j$, or change its weight to $w(a,b)=d_G(a,b)$. Thus, in case $\mathcal{H}_j$ changes, the weight of this edge is $w(T_a,T_b)=w(a,b)=d_G(a,b)$.

As a result, for every $T\notin\{T_a,T_b\}$ in $\mathcal{H}_j$, the adjacent edges to $T$ are the same, and therefore $e_T$ is the same. We also claim that $e_{T_a},e_{T_b}$ are the same as before. To see that, note that the shortest path $P_{a,b}$ between $a,b$ must contain a vertex outside of $T_a$, since $b$ is not in $T_a$. Let $y$ be the first vertex on $P_{a,b}$ (in the direction from $a$ to $b$) that is not in $T_a$, and let $x$ be its preceding vertex. Denote by $T'$ the tree of $F_j$ that contains $y$. The edge $\{x,y\}\in E$ is between $T_a,T'$, thus $\{T_a,T'\}$ is in $\mathcal{H}_j$ and
\[
w(T_a,T')=w(e(T_a,T'))\leq w(x,y)<d_G(a,b)=w(a,b)~.
\]

Hence, if there was a change in the graph $\mathcal{H}_j$, and so $w(T_a,T_b)=w(a,b)$, it cannot be that $e_{T_a}$ is this new edge $\{T_a,T_b\}$. Therefore, $e_{T_a}$ equals to one of the original adjacent edges to $T_a$ in $\mathcal{H}_j$, and so it must be equal to the original $e_{T_a}$. With the same proof, we can show that $e_{T_b}$ stays the same as before, and thus all of the edges $e_T$, where $T$ is in $F_j$, stay the same.

By the constructions of the Bor\r{u}vka forest of $\mathcal{H}_j$, and the partial Bor\r{u}vka forest $\mathcal{H}''_j$, we can see that since the edges $e_T$ are the same, then also these graphs are the same. Recall that $F_{j+1}$ is defined based only on $\mathcal{H}''_j$ and $F_j$. We conclude that $F_{j+1}$ is identical to the forest that would have been constructed for the graph $G$, thus complete our proof.

The lemma now follows from Lemma \ref{lemma:InternalEdges}.

\end{proof}

In the next lemma, we use the notations $p_i(x)$ and $h_i(x)$ to denote the parent and the height of $x$ in the tree of $F_i$ that contains $x$. 


\begin{lemma*}[Lemma \ref{lemma:ExtractingAlgorithm} from Section \ref{sec:StretchAnalysisWeighted}] 
There is an algorithm that given two vertices $u,v\in V$, and a simple path $Q=(S_1,S_2,...,S_t)$ in the graph $\mathcal{H}_i$, such that $u$ is in $S_1$ and $v$ is in $S_t$, returns a path $P$ in $G$ between $u$ and $v$, with
\[
w(P)\leq3^{i+1}(d_G(u,v)+w(Q))~.
\]

The running time of the algorithm is proportional to the number of edges in the output path $P$. The required information for the algorithm is the set $\{h_i(x),p_i(x)\}_{x\in V}$, and the set $\{e(S_j,S_{j+1})\}_{j=1}^{t-1}$.
\end{lemma*}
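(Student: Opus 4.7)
The plan is to construct $P$ by concatenating, for each cluster $S_j$ on $Q$, the unique tree path inside $S_j$ between a distinguished ``entry'' and ``exit'' vertex, glued together by the inter-cluster edges. Concretely, set $y_0 := u$, $x_t := v$, and for $1 \le j \le t-1$ let $\{x_j, y_j\} := e(S_j, S_{j+1})$ with $x_j \in S_j$ and $y_j \in S_{j+1}$, read off from the stored set $\{e(S_j,S_{j+1})\}_j$. For each $j \in [1,t]$, invoke Lemma~\ref{lemma:TreeRouting} (using $\{h_i(x), p_i(x)\}_{x \in V}$) on the tree of $F_i$ containing $S_j$ to obtain the unique tree path $P_j$ from $y_{j-1}$ to $x_j$, and output $P = P_1 \circ \{x_1,y_1\} \circ P_2 \circ \cdots \circ \{x_{t-1},y_{t-1}\} \circ P_t$. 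Lemma~\ref{lemma:TreeRouting} guarantees that each $P_j$ is produced in time proportional to $|P_j|$, so the total running time is $O(|P|)$.

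For the weight bound, the inter-cluster edges contribute exactly $w(Q) = \sum_{j=1}^{t-1} w(x_j,y_j)$, so it remains to control $\sum_{j=1}^{t} w(P_j)$. For every \emph{internal} cluster ($1 < j < t$), both edges $\{x_{j-1},y_{j-1}\}$ and $\{x_j,y_j\}$ are external to $S_j$, so Lemma~\ref{lemma:ExternalEdges} applied with $a = y_{j-1}$, $c = x_j$, $b = x_{j-1}$, $d = y_j$ (combined with the fact that the LCA path is a sub-path of the path through the root of $S_j$) yields
\begin{equation*}
w(P_j) \le \tfrac{3^i - 1}{2}\bigl(w(x_{j-1},y_{j-1}) + w(x_j,y_j)\bigr).
\end{equation*}
Summing over $j = 2, \dots, t-1$, each weight $w(x_r,y_r)$ is counted at most twice, giving a total contribution of at most $(3^i - 1)\, w(Q)$.

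The main obstacle is the two \emph{boundary} clusters $S_1$ and $S_t$: the vertices $u$ and $v$ themselves need not be endpoints of any edge external to their clusters, so Lemma~\ref{lemma:ExternalEdges} is not directly applicable, and a naive appeal to Lemma~\ref{lemma:InternalPaths} would leave us bounding the tree path by $(3^i-1)\,d_G(u,x_1)$, which has no a-priori upper bound in terms of $d_G(u,v)+w(Q)$ (the tree path itself is the only obvious short route between $u$ and $x_1$, producing a circular dependency). To break this, I will use the shortest $G$-path $\pi^*$ between $u$ and $v$ as a purely analytical anchor (the algorithm never touches $\pi^*$). Let $\{z_{s-1}, z_s\}$ be the last edge of $\pi^*$ with $z_{s-1}\in S_1$ and $z_s\notin S_1$. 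This edge is external to $S_1$, so Lemma~\ref{lemma:ExternalEdges} bounds the tree path from $z_{s-1}$ to $x_1$ by $\tfrac{3^i-1}{2}\bigl(w(z_{s-1},z_s)+w(x_1,y_1)\bigr) \le \tfrac{3^i-1}{2}\bigl(d_G(u,v)+w(Q)\bigr)$. The remaining segment from $u$ to $z_{s-1}$ is handled by Lemma~\ref{lemma:InternalPaths}, since both vertices lie in $S_1$ and $d_G(u,z_{s-1}) \le d_G(u,v)$. Using the tree-metric triangle inequality $w(P_1) \le w(P^{LCA}_{u,z_{s-1}}) + w(P^{LCA}_{z_{s-1},x_1})$ then yields $w(P_1) \le \tfrac{3(3^i-1)}{2}\,d_G(u,v) + \tfrac{3^i-1}{2}\,w(Q)$, with a symmetric bound for $w(P_t)$.

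Adding the three estimates and simplifying gives
\begin{equation*}
w(P) \le w(Q) + 3(3^i-1)\,d_G(u,v) + (2\cdot 3^i - 1)\,w(Q) \le 3^{i+1}\bigl(d_G(u,v)+w(Q)\bigr),
\end{equation*}
which is the desired bound. Edge cases ($t=1$, when $u,v$ lie in the same cluster and $w(Q)=0$; or $t=2$, where both clusters are boundary and there are no internal ones) are absorbed by the same estimates. I expect the anchor-edge trick for boundary clusters to be the only genuinely nontrivial step; the rest is bookkeeping.
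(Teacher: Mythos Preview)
Your proposal is correct and follows essentially the same route as the paper: the same concatenation $P=P_1\circ\{x_1,y_1\}\circ\cdots\circ P_t$, Lemma~\ref{lemma:ExternalEdges} for the internal clusters, and for the boundary clusters the same ``anchor'' trick of taking the first (you say last, which works equally well) exit edge of the shortest $u$--$v$ path from $S_1$, then combining Lemma~\ref{lemma:InternalPaths} on $(u,z_{s-1})$ with Lemma~\ref{lemma:ExternalEdges} on $(z_{s-1},x_1)$. The only cosmetic differences are that the paper keeps the sharper term $\tfrac{3^i-1}{2}w(x_1,y_1)$ rather than your $\tfrac{3^i-1}{2}w(Q)$ in the bound for $w(P_1)$, and it treats the case $t=1$ explicitly via a direct call to Lemma~\ref{lemma:InternalPaths} (your ``absorbed by the same estimates'' is not literally true there, since no exit edge from $S_1$ need exist), but neither affects the argument.
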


\begin{proof}[Proof of Lemma \ref{lemma:ExtractingAlgorithm}]

First, notice that if $t=1$, i.e., the path $Q$ is an empty path from $S_1$ to itself, then by Lemma \ref{lemma:InternalPaths} the unique path in $S_1$ between $u,v$ has a weight of at most
\[
(3^i-1)d_G(u,v)<3^{i+1}(d_G(u,v)+w(Q))~.
\]
Thus, we can just set $P$ to be this unique path, and the claim will hold. To find $P$, use the algorithm from Lemma \ref{lemma:TreeRouting}, which requires $O(|P|)$ time.

In case that $t>1$, first set $\{x_j,y_j\}=e(S_j,S_{j+1})$ for every $j\in[1,t-1]$, where $x_j$ is in $S_j$ and $y_j$ is in $S_{j+1}$. For $j=0$ define $y_0=u$, and for $j=t$ define $x_t=v$. Notice that $x_j$ and $y_j$ are never in the same tree of $F_i$, since we assume that $S_j\neq S_{j+1}$ for every $j\in[1,t-1]$. Then, for every $j\in[1,t]$ the algorithm uses the variables $\{h_i(x),p_i(x)\}_{x\in V}$ and the algorithm from Lemma \ref{lemma:TreeRouting} to find the unique path $P_j$ in $S_j$ between $y_{j-1}$ and $x_j$. This takes $O(|P_j|)$ time. By Lemma \ref{lemma:ExternalEdges} we know that for every $j\in[2,t-1]$,
\[
w(P_j)\leq\frac{3^i-1}{2}(w(x_{j-1},y_{j-1})+w(x_j,y_j))~.
\]

The returned path $P$ is then
\[
P_1\circ\{x_1,y_1\}\circ P_2\circ\{x_2,y_2\}\circ\cdots\circ P_{t-1}\circ\{x_{t-1},y_{t-1}\}\circ P_t~.
\]
See figure \ref{fig:ClustersPath} for an illustration.

\begin{center}
\begin{figure}[ht!]
    \centering
    \includegraphics[width=12cm, height=4.5cm]{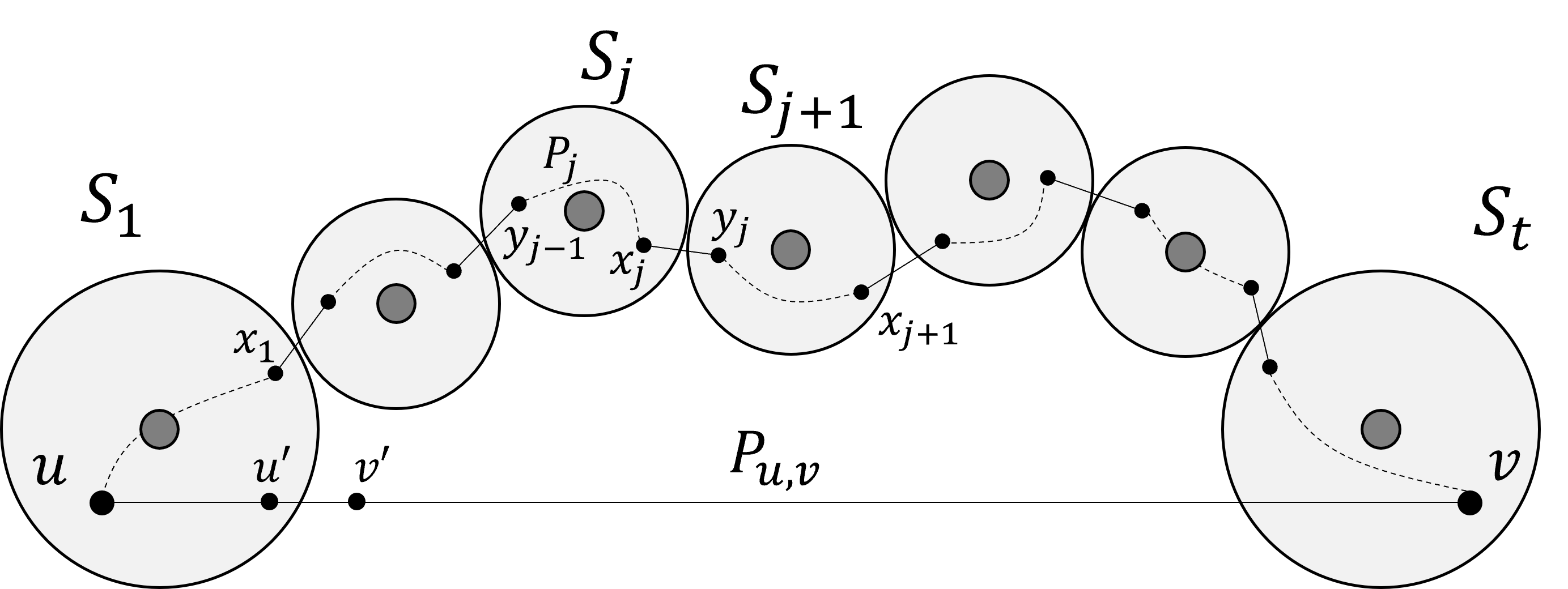}
    \caption{An illustration of Lemma \ref{lemma:ExtractingAlgorithm}. The dashed lines represent the paths $P_j$. The path denoted by $P_{u,v}$ is the shortest path between $u,v$ in the graph $G$. The vertex $v'$ is the first in this path to be outside of $S_1$, and $u'$ is its preceding vertex.}
    \label{fig:ClustersPath}
\end{figure}
\end{center}

Note that the running time of this algorithm is $\sum_{j=1}^tO(|P_j|)=O(|P|)$.

We now bound the weight of the path $P_1$. The bound for the weight of $P_t$ can be proved analogously. Let $P_{u,v}$ be the shortest path in $G$ between $u$ and $v$. Since $Q$ is simple, the vertex $v$ cannot be in $S_1$, otherwise $S_1=S_t$. Therefore $P_{u,v}$  pass through vertices that are not in $S_1$. Let $v'$ be the first vertex on $P_{u,v}$ (with the direction from $u$ to $v$) that is not in $S_1$, and let $u'$ be its preceding vertex on $P_{u,v}$. Note that $d_G(u,u')\leq d_G(u,v)$.

Denote by $P'_{u,u'}$ the unique path in $S_1$ between $u$ and $u'$, and by $P'_{u',x_1}$ the unique path in $S_1$ between $u'$ and $x_1$. The concatenation $P'_{u,u'}\circ P'_{u',x_1}$ is a path in $S_1$, not necessarily simple, between $u$ and $x_1$, and thus it contains $P_1$. Our bound on $w(P_1)$ will be achieved by bounding $w(P'_{u,u'})$ and $w(P'_{u',x_1})$.

For $P'_{u,u'}$, we use Lemma \ref{lemma:InternalPaths}:
\[
w(P'_{u,u'})\leq(3^i-1)d_G(u,u')\leq(3^i-1)d_G(u,v)~.
\]
For $P'_{u',x_1}$, we notice that $u',x_1$ are in the same tree $S_1$, and their adjacent edges $\{u',v'\}$ and $\{x_1,y_1\}$ satisfy $v',y_1\notin S_1$. Then, by Lemma \ref{lemma:ExternalEdges},
\[
w(P'_{u',x_1})\leq\frac{3^i-1}{2}(w(u',v')+w(x_1,y_1))\leq\frac{3^i-1}{2}(d_G(u,v)+w(x_1,y_1))~.
\]

We conclude that
\begin{eqnarray*}
w(P_1)&\leq&w(P'_{u,u'})+w(P'_{u',x_1})\\
&\leq&(3^i-1)d_G(u,v)+\frac{3^i-1}{2}(d_G(u,v)+w(x_1,y_1))\\
&=&\frac{3}{2}(3^i-1)d_G(u,v)+\frac{3^i-1}{2}w(x_1,y_1)~,
\end{eqnarray*}
and we similarly can prove that
\[
w(P_t)\leq\frac{3}{2}(3^i-1)d_G(u,v)+\frac{3^i-1}{2}w(x_{t-1},y_{t-1})~.
\]

Summarizing all of our results, we get
\begin{flalign*}
&w(P)&=\;&\sum_{j=1}^tw(P_j)+\sum_{j=1}^{t-1}w(x_j,y_j)&\\
&&\leq\;&\left[\frac{3}{2}(3^i-1)d_G(u,v)+\frac{3^i-1}{2}w(x_1,y_1)\right]&(P_1)\\
&&&+\;\sum_{j=2}^{t-1}\frac{3^i-1}{2}(w(x_{j-1},y_{j-1})+w(x_j,y_j))&(\{P_j\}_{j=2}^{t-1})\\
&&&+\;\left[\frac{3}{2}(3^i-1)d_G(u,v)+\frac{3^i-1}{2}w(x_{t-1},y_{t-1})\right]+\sum_{j=1}^{t-1}w(x_j,y_j)&(P_t)\\
&&=\;&3(3^i-1)d_G(u,v)+\sum_{j=1}^{t-1}3^iw(x_j,y_j)&\\
&&<\;&3^{i+1}(d_G(u,v)+\sum_{j=1}^{t-1}w(x_j,y_j))&\\
&&=\;&3^{i+1}(d_G(u,v)+w(Q))~.&\\
\end{flalign*}
The last step is due to the fact that $\{x_j,y_j\}=e(S_j,S_{j+1})$. This completes the proof of the lemma.

\end{proof}

\end{document}